  \providecommand\BibTeX{{%
    \normalfont B\kern-0.5em{\scshape i\kern-0.25em b}\kern-0.8em\TeX}}}
\tiny\color{gray},
\def\replayalg{{\texttt{IngestReplay}}\xspace}
\def\fetchalg{{\texttt{FetchReplay}}\xspace}
\newcommand{\dataset}[1]{{\mathcal{D}_{#1}}}
\newcommand{\attrib}[1]{\mathcal{A}_{#1}}
\def\numattrib{{M}}
\newcommand{\attribinit}[1]{{Attr_{#1}}}
\newcommand{\attribinitidx}[2]{{Attr_{#1}[#2]}}
\newcommand{\chset}[1]{{\mathcal{C}_{#1}}}
\newcommand{\cohinit}[1]{{C({#1})}}
\newcommand{\alg}[1]{Alg(#1)}
\def\nummetric{{K}}
\newcommand{\metricinit}[1]{{\mathbf{m}_{#1}}}
\newcommand{\metricinitidx}[2]{{m_{#1}[#2]}}
\def\metricset{{\mathrm{M}}}
\newcommand{\replay}[1]{{Repl({#1})}}
\newcommand{\harsha}[1]{{\color{red} (Harsha: #1)}}
\newcommand{\vyas}[1]{{\color{blue} (VS: #1)}}
\newtheorem{theorem}{Theorem}
\newtheorem{definition}{Definition}
\newtheorem{lemma}[theorem]{Lemma}
\newcommand{\model}{\textsc{AHA}\xspace}
\newcommand{\rawingest}{\textsc{StoreRaw}\xspace}
\newcommand{\cubeingest}{\textsc{StoreOutput}\xspace}
\newcommand{\sketchsum}{\textsc{Sketching}\xspace}
\newcommand{\kvstore}{\textsc{KeyValueStore}\xspace}
\newcommand{\unifsample}{\textsc{Sampling}\xspace}
\newcommand{\subgroups}{subgroups\xspace}
\newcommand{\subgroup}{subgroup\xspace}
\newcommand{\conviva}{\textsc{VideoAnalytics}\xspace}
\newcommand{\trace}{\textsc{UmassTrace}\xspace}
\newcommand{\taxi}{\textsc{NYTaxi}\xspace}
\newcommand{\synth}{\textsc{Synth}\xspace}
\newcommand{\chbench}{\textsc{CH-Bench}\xspace}
\newcommand{\mgbench}{\textsc{MGBench}\xspace}
\newcommand{\bigdata}{\textsc{BigData}\xspace}
\newcommand{\imdb}{\textsc{IMDB}\xspace}
\newcommand{\meansigma}{\textsc{3Sigma}\xspace}
\newcommand{\knn}{\textsc{KNN}\xspace}
\newcommand{\isof}{\textsc{IsoForest}\xspace}
\newcommand{\hide}[1]{{}}
\newcounter{findinglabel}
\newcounter{findingnmbr}
\renewcommand{\thefindinglabel}{\textbf{\thefindingnmbr}}
\newcounter{insightnmbr}
\newcommand{\theinsightlabel}{\textbf{\theinsightnmbr}}
\newenvironment{insight}[1]{
    \begin{list}{\textbf{Insight }\theinsightlabel {\em\xspace #1}:~}{\stepcounter{insightnmbr}\setlength{\labelwidth}{0pt}\setlength
        {\labelsep}{0pt}\setlength{\leftmargin}{0in}\noindent\rule{\linewidth}{0.5pt}\vspace{-3pt}\item \bf \em}}{\\*[-7pt]\end{list}\vspace{-10pt}\noindent\rule{\linewidth}{0.5pt}}
\title{\model:  Scalable  Alternative History Analysis for Operational Timeseries Applications}
\author{Harshavardhan Kamarthi}
\affiliation{%
  \institution{Georgia Institute of Technology}
  \city{Atlanta}
  \country{USA}
}
\email{hkamarthi3@gatech.edu}
\author{Harshil Shah}
\affiliation{%
  \institution{Conviva}
  \city{Foster City}
  \country{USA}
}
\email{hshah@conviva.com}
\author{Henry Milner}
\affiliation{%
  \institution{Conviva}
  \city{Foster City}
  \country{USA}
}
\email{hmilner@conviva.com}
\author{Sayan Sinha}
\affiliation{%
  \institution{Georgia Institute of Technology}
  \city{Atlanta}
  \country{USA}
}
\email{sayan.sinha@cc.gatech.edu}
\author{Yan Li}
\affiliation{%
  \institution{Conviva}
  \city{Foster City}
  \country{USA}
}
\email{yan@conviva.com}
\author{B. Aditya Prakash}
\affiliation{%
  \institution{Georgia Institute of Technology}
  \city{Atlanta}
  \country{USA}
}
\email{badityap@cc.gatech.edu}
\author{Vyas Sekar}
\affiliation{%
  \institution{Carnegie Mellon University}
  \city{Pittsburgh}
  \country{USA}
}
\email{vsekar@ece.cmu.edu}
\keywords{Large-Scale data processing, Time-series, Summarization}
\begin{document}

\begin{abstract}

  Many operational systems collect high-dimensional timeseries data about users/systems on key performance metrics. For instance,  ISPs, content distribution networks, and video delivery services collect quality of experience metrics for user sessions associated with metadata (e.g., location, device, ISP).  Over such historical data,  operators and data analysts often need to run retrospective analysis; e.g., analyze  anomaly detection algorithms, experiment with different configurations for alerts, evaluate new algorithms, and so on. We refer to this class of workloads as {\em alternative history analysis}  for operational datasets. We  show that in such settings,  traditional data processing solutions  (e.g., data warehouses, sampling, sketching, big-data systems) either pose high operational costs or do not guarantee accurate replay. We design and implement a system called AHA (Alternative History Analytics), that overcomes both challenges to provide cost efficiency and fidelity for high-dimensional data.  The design  of AHA is based on analytical and empirical insights about such workloads: 1) the decomposability of underlying statistics; 2)  sparsity in terms of active number of subpopulations over attribute-value combinations;  and 3) efficiency structure of   aggregation  operations in modern analytics databases.  Using multiple real-world datasets and as well as case-studies on production pipelines at a large video analytics company, we show that  AHA provides 100\% accuracy for a broad range of downstream tasks and up to 85$\times$  lower total cost  of ownership (i.e., compute + storage) compared to conventional methods.

\end{abstract}

\maketitle

\section{Introduction}


Many operational systems collect high-dimensional data about user and system-level performance indices over time. Analysts need to run diverse algorithms on this collected data; e.g. anomaly detection over subgroups of users grouped by their attributes.  For instance, a large video monitoring system collects  `quality of experience'  metrics  (e.g., bitrate, buffering) for video sessions to find anomalous patterns affecting \subgroups of users~\cite{videoqoe,manousis2022enabling}. These patterns might include
performance degradation affecting
\subgroup; e.g., are users from a ISP-city combination showing degraded performance?


Operational analytics tasks often require the ability to perform  {\em alternative history analytics} over longitudinal  datasets~\cite{googlecloud}.
For instance,   ML scientists may need to do  regression testing (i.e., CI/CD) on historical datasets for benchmarking~\cite{kreuzberger2023machine}.
For anomaly detection, a customer  may query if alerts  triggered several weeks ago would be suppressed by using different sensitivity thresholds~\cite{soldani2022anomaly}.

 Supporting alternative history analysis in operational settings is challenging.  
 The datasets and analytics entail significant scale, cost, unpredictability of downstream tasks, and combinatorially large  \subgroups of interest.
Indeed, several seemingly natural solutions fall short. 
As such, achieving {\em low total cost of ownership} and {\em accurate replay for unforeseen tasks} has been elusive.   To see why,  consider the strawman solutions from Table~\ref{tab:intro:prior}.  On one extreme, we can store the raw session measurements (i.e., attributes, metrics) per time step in a database and compute the combinations/statistics of interest when the query is issued.   Unfortunately, there are data retention and cost challenges; e.g.,  there may be terabytes of raw user session data per day. At the other extreme, we can precompute a few attribute combinations (e.g., ``heavy hitters) and statistics of interest. Similarly, another natural approach will be to only store the data about the  alerts triggered. However,  these approaches  lack coverage over future queries that users want to try.

\begin{table}
    \begin{center}
        \begin{footnotesize}
            \begin{tabular}{p{2.5cm}| p{1.5cm}| p{1.5cm} | p{1.5cm}}
                Solution                                    & Low Cost for High Dimensional Data & High fidelity/flexibility & Support for wide range of stats. \\ \hline
                Store raw session  data                                   & No                                 & Yes                       & Yes                              \\ \hline 
                Sample across sessions                                    & Yes                                & No                        & No                               \\ \hline 
                Subpopulation statistics using sketches (e.g. \cite{manousis2022enabling}) & Yes                                & No                        & No                               \\ \hline 
                Key-value store for all subgroup statistics   (e.g.,~\cite{ben2016heavy})      & No                                 & Yes                       & Yes                              \\\hline
                Alternative history Analysis (\model)                                      & Yes                                & Yes                       & Yes                              \\
            \end{tabular}
        \end{footnotesize}
        \caption{Conventional solutions fall short of our key requirements for supporting alternative history analysis for operational time series tasks.}\label{tab:intro:prior}
    \end{center}
\end{table}

\begin{figure}[t]
    \centering
    \includegraphics[width=.85\linewidth]{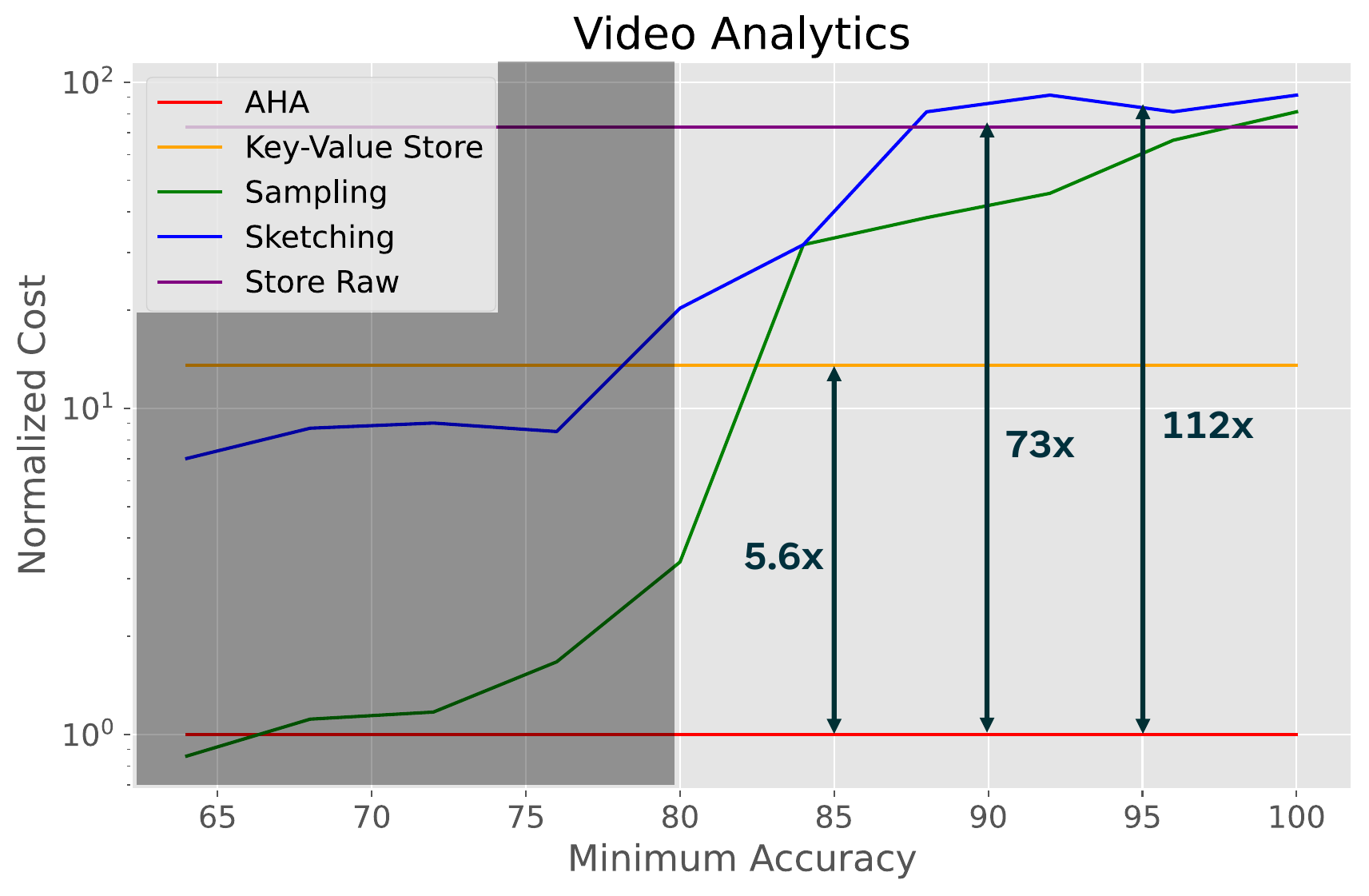}
    \caption{Relative costs of baselines w.r.t \model to reach given minimum accuracy for at least 90 percentile of cohorts. \model is 5.6x better than the best 100\% accurate baseline, 73x cheaper than the default baseline of storing the raw data, and can be over 100x cheaper than approximate solutions to attain a common goal of $> 95\%$ accuracy.}
    \label{fig:mainfig}
\end{figure}
In this paper, we present the design and implementation of   {\em \model}, a practical system for supporting alternative history analysis in operational systems. The design of \model is based on key analytical and empirical insights on the structure of operational data, query patterns,  the downstream tasks that are typically used in these settings, and  the capabilities of modern data processing systems.




We create a practical system, where at data ingest time we track necessary statistics of the metrics for multiple (or all possible) subsets of subpopulations, from which metrics for other groups can be derived.
Then, when the future alternative analytics history task is issued, we can compute the desired metrics of interest.  
This decoupled workflow enables us to {\em delay the binding} between the compute-at-ingest and the compute-at-query time to  enable  {\em low total cost of ownership} and {\em accurate replay for unforeseen tasks}.
In summary, our work makes the following  contributions: 
 (1) \textbf{Formulating alternative history analysis  (\S\ref{sec:motivation}):} To the best of our knowledge,  we are the first to formulate the alternative history analysis in operational time series settings. We identify the key cost and accuracy challenges in supporting this capability at scale;  (2) 
    \textbf{Design and implementation of \model (\S\ref{sec:model}):} We design and implement \model,
          which enables retrieval of
          wide range of features for any possible group efficiently;  
  (3) \textbf{Correctness and coverage guarantees (\S\ref{sec:model}):} We provide theoretical guarantees establishing perfect accuracy of \model under a broad spectrum of downstream tasks and methods; and   (4) 
   \textbf{Benchmarking on real-world deployment scenarios (\S\ref{sec:results}):} We benchmark \model and other state-of-art solutions on multiple datasets, including using it in production at a large video analytics company. We show that \model provides 34-85 times less total ownership cost without loss in accuracy compared to baselines, which enables cost savings of over \$0.7M per month. We also provide a deployment-study observing the impact on a production data pipeline with 6.2 times reduced total cost.

\section{Background and Motivation}
\label{sec:motivation}

In this section, we provide background on the structure of timeseries analytics tasks  in operational settings and the  need for alternative history analytics.
We identify  requirements  to
support these use cases and discuss why existing solutions fall short.



\subsection{Operational Timeseries Analytics}



We consider time-series analytics workflows that appear in {\em operational settings}.
At a high level, operational settings deal with {\em high dimensional} data and operators are interested in spatiotemporal insights across multiple {\em user/endpoint \subgroups}  to drive operations.
To make this concrete, we consider the real-world example of large video analytics service on which this system is deployed.
Video analytics services perform Quality of Experience (QoE) analysis on user data, enabling content and internet providers
to improve user experience, viewership retention, and satisfaction~\cite{videoqoe}.
Various video analytics \textit{metrics} are collected from video sessions consisting of many different users with different characteristics.
These metrics include bitrate, number of frames dropped, etc.
Each user's viewing session is also annotated with additional metadata called \emph{attributes}.

Typical analysis workflows involve identifying important patterns in metrics over \textit{groups} of users categorized by their similarities in a subset of their attributes (such as geolocation, ISP, device used, etc.).
For example, when buffering times of users from a state using a specific ISP are unusually high, the ISP can be notified to rectify any network issues. 
Detecting QoE issues and patterns entails detecting anomalies in generated metrics
for each user group determined by the common set of attribute assignments among the users in the group. Typical datasets contain millions of possible user groups based on the number and possible combinations of user attributes. These groups are monitored for patterns in multiple metrics.

 Similar user and product analytics applications are encountered in other domains such as network analysis, monitoring logs, usage analytics etc. where data from multiple types of users are collected and analyzed in  other domains; e.g., telecommunications, observability, IoT, mobile/ad analytics, and so on~\cite{fernandes2019comprehensive,bergmann2019mvtec,fahim2019anomaly,cook2019anomaly,hasan2019attack,parwez2017big}.


\subsubsection{Datasets schema and typical query templates}
We define the data schema for subpopulation analytics using video analytics since it is our primary application (See Fig. \ref{fig:aggmetrics}).
The dataset consists of telemetry from \emph{sessions} of different users; e.g.,   video-watching sessions. Other examples could be
user sessions in a mobile application or network flows or TCP sessions.
Each session is annotated with a set of \emph{attributes} that describe the user or the session; e.g.,  user location, ISP, device information, etc.
 Each session is associated with a set of \emph{ per-session KPIs} for each epoch of measurement; e.g., per-minute  QoE metrics like buffering time, bitrate, etc.

Sessions are grouped into \emph{cohorts} based on the attributes of the session.
A cohort is a group of sessions that have the same set of attributes.
For example, a cohort could be sessions from a specific
location or all sessions from a specific ISP.
In our experiments, we observe 300 to over one million cohorts based on the dataset.

Operational tasks are often interested in analyzing the metrics at a cohort-level granularity to see global patterns of good/bad performance.   
Hence, we can aggregate the session measurements, called session metrics, of all sessions in a cohort to get the \emph{cohort summary metrics} (referred to as \emph{metrics}).
Example cohort metrics could be the average buffering time of all sessions in a cohort.


\begin{figure}[h]
    \centering
    \vspace{-.1in}
    \includegraphics[width=0.95\linewidth]{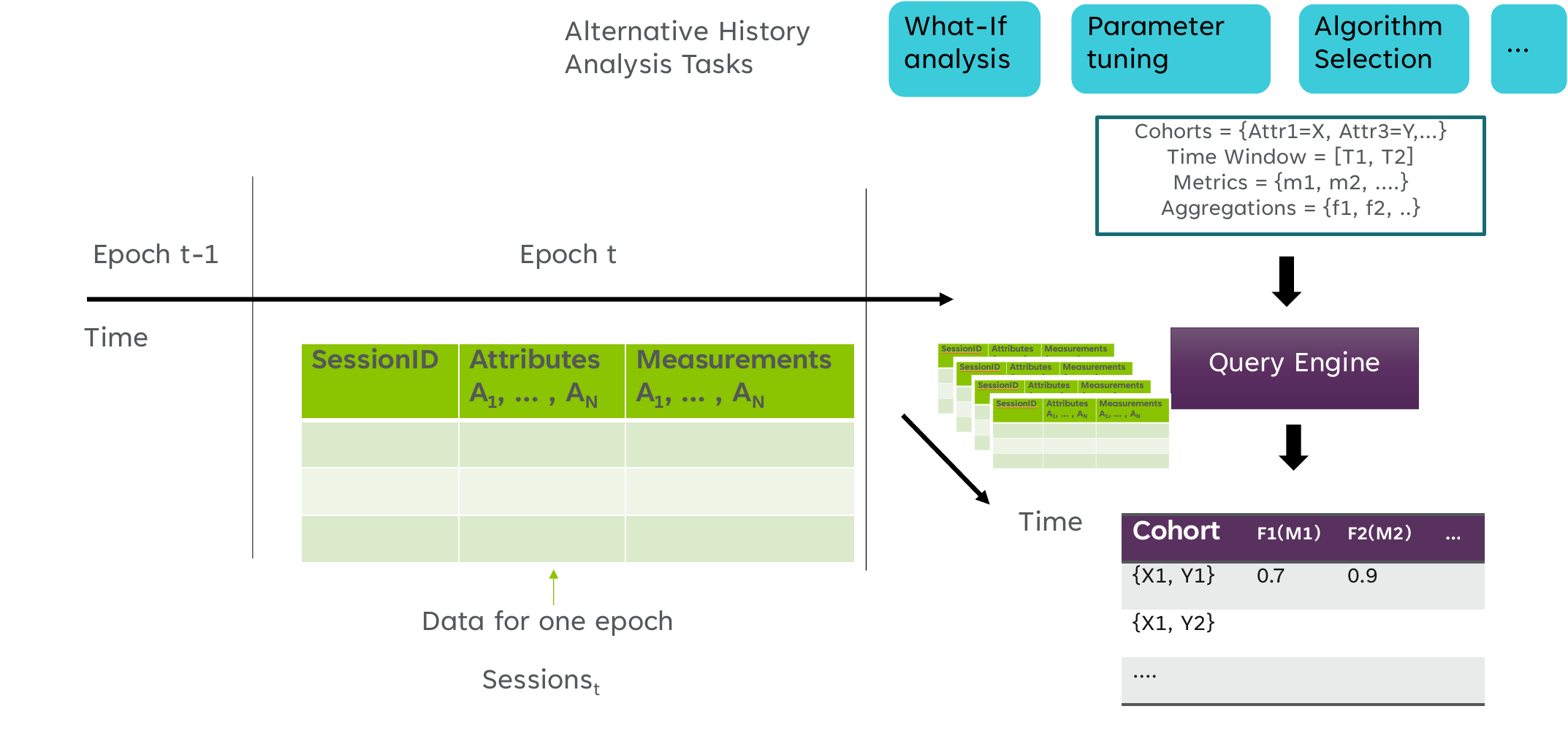}
    \vspace{-.1in}
    \caption{Data setting for Alternative history analysis. users require aggregate statics of arbitrary user cohorts across any time-step in the past.}
    \label{fig:aggmetrics}
    \vspace{-.2in}
\end{figure}

\subsubsection{Need for alternative history analysis}

While such operational data is used for real-time detection of anomalies and forecasting,  our discussions with operators,  algorithm developers, and  MLOps practitioners revealed a more fundamental pain point. 
 In essence, operational systems are constantly in a state of flux due to workload changes, algorithm developments, and customer asks. 
 Consequently, there are numerous use cases that require operators and analysts to access and query longitudinal data as part of everyday operations. 
 Most of the queries can be formulated as a function that derives aggregate statistics and compute predictions
 on them from the session metrics of a cohort.

Formally, let $\mathbf{m}$ be the vector of session metrics at a specific time-step $t$.
and $F(\mathbf{\{m_i\}}_{i\in S})$ be 
  function $F$ that computes aggregate statistics across per-session metrics for the sessions in cohort $S$.
We define cohort $S$ as having attributes $\mathcal{A} = \mathbf{a}$.
For example, when using Spark, we can write PySpark query to compute this statistic and to compute this metric for all cohorts, we can run this query for all possible cohorts:

\begin{lstlisting}
distinct_a = sessions.select("attributes").distinct()
for a_value in cohort_list:
    filtered_df = sessions.filter(sessions.attributes == a_value["attributes"])
    result_df = filtered_df.select(
        F.col("attributes"), 
        F.expr("F(mi_column)")  # Apply function F, replace mi_column with actual column
    )
\end{lstlisting}
Sometimes, we aggregate metrics over time:
\begin{lstlisting}
distinct_a = sessions.select("attributes").distinct()
for a_value in cohort_list:
    filtered_df = sessions.filter(
        (sessions.time > t1) &
        (sessions.time < t2) &
        (sessions.attributes == a_value["attributes"])
    )
    result_df = filtered_df.select(
        F.col("attributes"),
        F.expr("F(mi_column)")  # Replace 'F(mi_column)' with the actual function you are applying
    )
\end{lstlisting}

We describe examples applications  as follows:

\noindent$\bullet$ {\em What-if analysis:} 
Operational time-series analysis needs to provide configuration knobs for users and analysts; e.g., deciding how long an anomaly should persist or how many users are affected.
However, customers need data-driven guidance to understand why predictions are generated, or how changing algorithm parameters would impact the prediction accuracy, false positive rate, etc. based on historical data. For example, the function $F$ could test if a statistic of a measurement like mean login time of an app is greater than a constant threshold. 
Changing the threshold or the statistic to check if certain predictions change would be a common use case.

\noindent$\bullet$ {\em Data-centric regression test in CI/CD for MLOps:}  Given the constant flux in workloads and model drifts, ML engineers and algorithm developers need to refine algorithms and configurations;  e.g., hyperparameter tuning as workloads or baselines change.  When they do so, they need the equivalent of  DevOps best practices such as {\em regression tests}. 
Regression test, for example, could involve testing the algorithm in $F$ for better thresholds and parameters to optimize for accuracy for new datasets based on historical data.

\noindent$\bullet$ {\em Algorithm selection:}  As new time-series methods emerge,   ML teams want to continuously test out novel approaches from the research community.   
However, in practice, there is no one-size-fits-all, and ML teams will need to rigorously test new algorithmic approaches as well as feature engineering in their specific settings (Eg: failure rate of starting a video session) before deploying them.
Testing novel algorithms and statistics required by changing the function $F$ and tuning the model parameters could 
help modellers discover more performant or efficient workflows.




\begin{figure}[th]
    \centering
    \vspace{-.1in}
    \includegraphics[width=.80\linewidth]{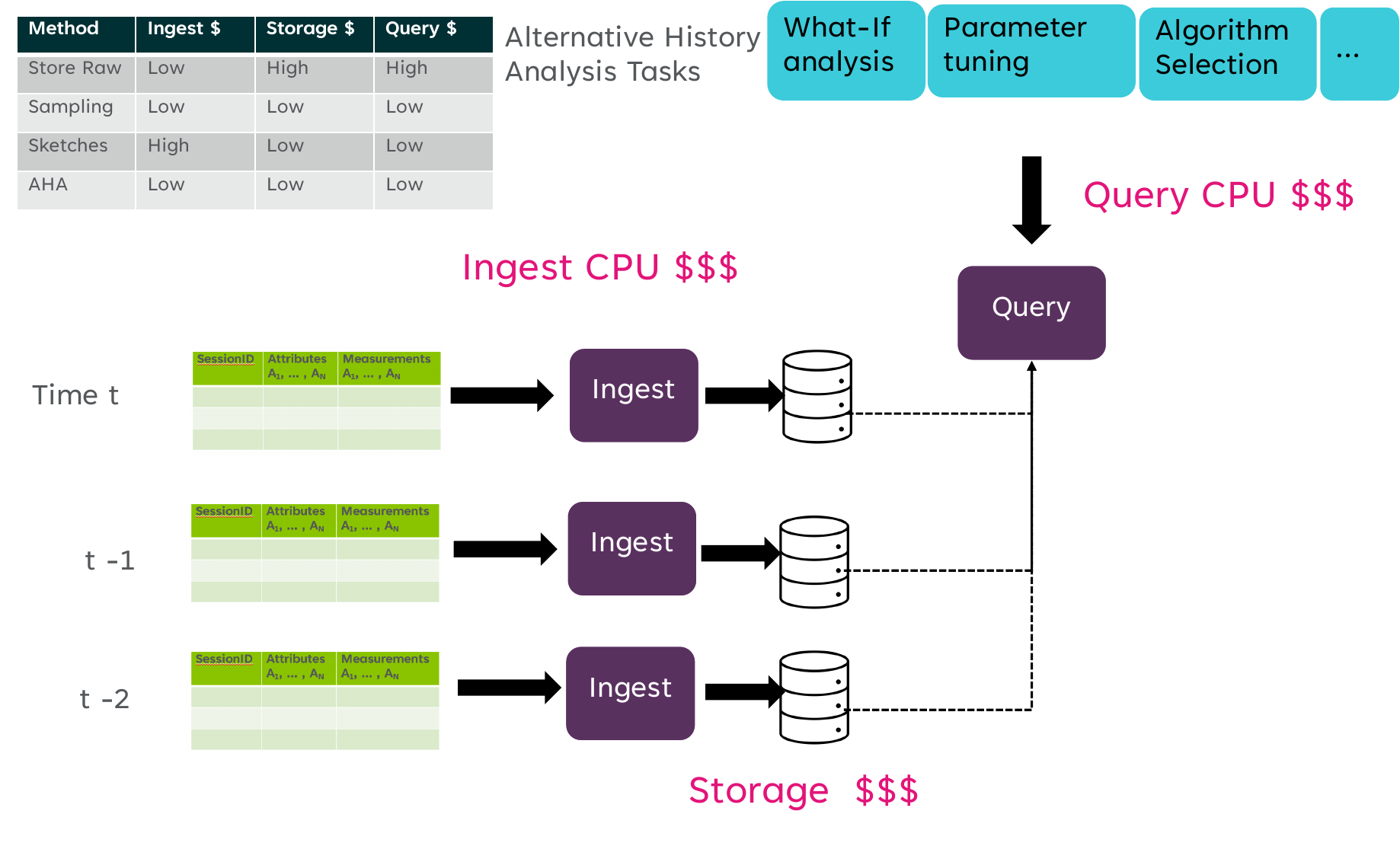}
    \vspace{-.1in}
    \caption{Design space for alternative history analysis: The session data is ingested and stored at each epoch of time. The user queries from the stored data across time for various tasks. \model system computes the required features from the stored summary for the specific application.}
    \label{fig:ReplayNeed}
    \vspace{-.2in}
\end{figure}

\section{Problem Formulation}


Given these illustrative scenarios, we formally define the problem of  {\em alternative history analytics}  in  operational timeseries applications.
We begin with some notations and definitions before deriving key requirements. Then, we discuss why canonical solutions from the literature fall short for this class of operational problems. 

\par \noindent \textbf{User groups and Attributes:}
We formally define important aspects of the dataset.
Let the dataset at time $t$ be denoted as $\dataset{t}$.
The dataset consists of data from multiple users (usually in order of millions or tens of millions) at every time-step (minute or second).
Each user is annotated with $\numattrib$ attributes which describe the user-related features. In the video analytics case, these attributes can be user location, ISP,
device information, etc.
Formally, each user $j$ in the dataset is characterized with $\numattrib$ attributes which are initialized
as $\attribinit{j} = \{\attribinitidx{j}{i} \}_{i=1}^\numattrib \in
    \attrib{1} \times \attrib{2} \times \dots \times \attrib{\numattrib}$
where $\attrib{i}$ is the space of all possible values for the attribute $i$.
We make a reasonable assumption that each attribute can only take \textit{discrete values}.
Further, we record $\nummetric$ \textit{metrics} from each user.
The metrics for user $j$ is denoted as $\metricinit{j} = \{\metricinitidx{j}{i}\}_{i=1}^\nummetric$.
Therefore, the dataset at time $t$ is
$\dataset{t} = \{(\attribinit{j}, \metricinit{j})\}_{j=1}^{N_t}$
where $N_t$ is the  number of users tracked at $t$.
Each user \textit{group} is defined by attributes common to all users in the group.
A group $\cohinit{\mathbf{a}}$ is defined by attribute initialization
$\mathbf{a} \in \attrib{1} \cup \{*\} \times \attrib{2} \cup \{*\} \times \dots \times
    \attrib{\numattrib} \cup \{*\}$, i.e,
$\mathbf{a}$ is a sequence of attribute values or $*$ where
 $*$ denotes that the group users can take any
values.
We denote $\mathcal{C}_t$ to be set of non-empty groups from $\dataset{t}$.
However, the number of possible user groups is an exponential of attributes: the number of user groups is the product
of the number of possible subsets of attributes times the number of possible values each of the attributes can take ($\prod_{i=1}^{\numattrib}(|\attrib{i}| +1) - 1$).
Therefore, we use important insights about dataset and analytics requirements
to design \model to be cost-efficient.



\par \noindent \textbf{Analytics Algorithm:}
For each user group $\cohinit{\mathbf{a}}$,
let $\dataset{t, \mathbf{a}} = \{(Attr_j, \mathbf{m}_j): Attr_j \in \mathbf{a}\}$
be the set of user data for all users in that group.
An algorithm $\alg{\dataset{t, \mathbf{a}}; \theta}$ with configuration parameters
$\theta$
provides an output $\alg{\dataset{t, \mathbf{a}}; \theta} \in \mathbb{R}^b$.
For example, for the task of detecting if a given cohort is anomalous,
$\alg{\dataset{t, \mathbf{a}}; \theta} \in [0,1]$, i.e. it predicts the probability that a given group
is anomalous.
Note that $\dataset{t, \mathbf{a}}$ contains metrics data from a variable number of users, i.e.,
the number of users observed for any group varies across time, with many groups
having very small or zero observed samples for most time-steps.
We can divide the algorithm $\alg{}$ into two parts.
First, we extract the required fixed set of features $F(\dataset{t, \mathbf{a}}; \theta_F)$ from
the dataset $\dataset{t, \mathbf{a}}$.
These features are various statistics collected from the metrics of all the user of the group. They are usually engineered by domain experts leveraging application requirements or domain expertise.
These features are further used by an analytics model $\mathcal{M}(F(\dataset{t, \mathbf{a}}); \theta_M)$ to generate the predictions for the group $\mathbf{a}$.
The model $\mathcal{M}$ of features can be any statistical or machine learning model. Therefore, the algorithm can be described as $\alg{} = <F, \mathcal{M}, \theta>$ where $\theta = (\theta_F, \theta_M)$.
\par \noindent \textbf{Requirements:}
The goal of \model is to support a query $\mathcal{Q}$ of type $<\mathrm{C}, \alg{}, \theta, T>$ which  generate the output for a potentially new algorithm $\alg$ with hyperparameter $\theta$ for specific groups $\mathrm{G}$.
Conceptually, we consider candidate solution consisting of three stages:
(1)  {\em Ingest and summarize} the multi-group data;
(2)  {\em Store} the summarized data to enable a wide range of retrospective analysis;
(3)  {\em Extract} relevant features/information from data summary required by the analysis algorithm $\alg$.


We define the following design objectives for any AH solution:

    \noindent$\bullet$  \textbf{Estimation Equivalence for Future Analytics:}  Given a family of anomaly detection algorithms
          $\alg{} \in \mathrm{Alg}$,
          (\replayalg, \fetchalg) provides task-level equivalence if
          \begin{equation}
              \alg{F(\dataset{});\theta} \approx \alg{F(\replay{\dataset{}});\theta} \forall \theta, \forall \alg{} \in \mathrm{M}
          \end{equation}
          i.e.,
          the estimations of the model if we use session data should be similar
          to estimates using the features from replay. 
          
         \noindent$\bullet$  \textbf{ Equivalence for Cohort metrics:}
          Similarly, (\replayalg, \fetchalg) provides cohort-level equivalence
          if $F(\dataset{}) \approx F(\replay{\dataset{}}) $.
\emph{Note:} Methods that guarantee 100\% accuracy are called \emph{strongly-equivalent}
    solutions. These include deterministic solutions such as key-value stores, methods that accurately calculate the metrics, etc.
    In contrast, sampling and sketching provide approximate estimates. Such methods are called \emph{weakly-equivalent} solutions.
    
    \noindent$\bullet$  \textbf{Low total cost of ownership}:  The total cost of ownership depends on the compute cost for ingest and replay and the storage footprint.
          Ideally, the size of replay data, denoted as $\replay{\dataset{}}$, should be orders of magnitude less than the raw dataset $\dataset{}$:
          $
              |\replay{\dataset{}}| << |\dataset{}|.
          $
          The computational cost of using \replayalg and \fetchalg for deriving the features for models should be similar or significantly better than directly deriving features from raw data $F(\dataset{})$.


\par \noindent \textbf{Limitations of related work:}
To tackle this challenge, we  consider  a few seemingly natural straw-man solutions from the literature, and discuss why they fail to satisfy our key requirements .
At a high level, we can consider the design space of options as shown in Figure~\ref{fig:ReplayNeed} in terms of the computation done at {\em ingest} and at {\em query time}. 
Within this framing, we can classify  prior  work that tackles  multidimensional data ingestion and retrieval  into approximate and precise methods. Depending on the type of precise  or approximate analysis performed at ingest/query time, we can have a tradeoff with respect to the ingest, storage, and query time cost. 

\textit{Precise Methods:} Most standard operational pipelines
either compute required features from data stream at ingestion time or later during fetch.
Most of these methods use a database analytics solution like
Spark, Hadoop, etc~\cite{armbrust2015spark,dittrich2012efficient,yang2014druid}.
These SQL and NoSQL databases usually compute the required features precisely and therefore provide 100\% accuracy.
At one end consider the option of doing nothing at ingest, storing the raw session data, and running exact queries at query time. 
Alternatively, rolling up aggregate statistics across multiple subgroups for each possible subpopulation may provide faster compute time, but still requires storing the entire raw data during
ingest to support arbitrary queries.
Both these solutions are expensive with increase in data size to store or with exponential increase in user groups as number of attributes increase as shown in our real-world applications~\cite{agarwal2013blinkdb} in spite of many lossless compression methods used by some of these databases.

\textit{Approximate Methods:} These methods compute approximation of a metric  at ingest or query time usually at lower compute and memory cost . 
Online Sampling is a popular technique that reduces compute and latency costs~\cite{chaiken2008scope,ting2019approximate}. Sketch based methods provide a more efficient bounded tradeoff between memory and accuracy with sublinear complexity.
These frameworks generate approximate summaries during ingest and use
them to estimate statistics with apriori provable error bounds~\cite{manousis2022enabling,cormode2011synopses,durand2003loglog,yang2014druid}.
Analytics databases such as Apache Druid~\cite{yang2014druid} also use these approximate algorithms to compute distinct elements, histograms, etc.
While these methods can provide significant savings in storage cost, they do not guarantee predictive equivalence for sample-poor user groups and scenarios and can take significant computational resources 
in ingesting large high-dimensional datasets.

\hide{
    \paragraph{Relation to time-series anomaly detection methods}

    Typical time-series anomaly detection problems~\cite{darban2022deep} deal with a single univariate or multi-variate time-series.
    In the context of our problem, this problem space can be classified as dealing with a single group $c$ where we only observe a single sample or an estimate of a statistic of $P_{c,t}$ for past time-steps.
    Group Anomaly Detection (GAD)~\cite{NIPS2011_eaae339c} is another class of problems that have properties similar to LAD. However, in GAD the groups do not have lattice partial order relations and the groups are fixed number of groups with independent distributions.
    A class of time-series data that resembles our setting is found in hierarchical time-series  (HTS)~\cite{kamarthi2022profhit} where each time-series is associated with a node of a tree. The non-leaf time-series are derived from the children nodes' time-series via simple aggregation function such as summation. The relation between groups in LAD can be similarly represented by a Hasse diagram over lattice power set which is usually not a tree.
    Further, the samples in LAD are multi-variate where as HTS typically deals with univariate time-series for each node. The characteristics of samples of a cohort is a mixture of samples of its children cohorts in Hasse diagram and therefore many statistics may not follow simple summation constraints observed in HTS.

    \vyas{this seems at odds with what we write in table 1. ideally the prior work we refer to jere is the set of baseliens we are discussing in table 1?  and even more ideally we put our punchline figure showsing us vs them right here or even in intro}
}

\section{\model  Design and Implementation}
\label{sec:model}

In this section, we describe key analytical and empirically validated insights about the nature of the dataset and analytics algorithms in 
most applications. We use these insights to provide a practical basis for building \model to achieve efficiency and accuracy at scale.

\subsection{Main Insights}





Alternative history analytics queries are   {\em unpredictable}   over the candidate statistics and subgroups of interest.
Hence, our design philosophy is that of a {\em late binding} architecture from a systems perspective. By delaying the binding of the subgroup summaries and the supported task, we argue \model will be more cost-efficient and scalable.
With this overarching design decision, we focus on the nature of the analytics tasks and datasets to identify useful properties that enable \model  to effectively store and efficiently retrieve relevant data from user subgroups.

\begin{insight}{Task Decomposability\\}
    For many downstream analytics and anomaly detection tasks the required statistics $F(\cdot)$ exhibit {\em decomposable}  property, i.e., the statistic for a ``parent'' user group can be derived from a smaller number of statistics of ``children'' groups that constitute the parent group.
    \label{insight:decomposability}
\end{insight}

In operational time series settings,  there is a subset relationship across user groups of interest.
For example, a user group corresponding to users with a specific ISP (say Comcast) could be
using many CDNs for a video session.
Moreover, the necessary statistics $F$ required for most analytics algorithms can be derived from statistics of children subgroups.
For example, the mean of a metric for a group of users can be derived from the mean of the metric for each of the children subgroups.
This is a well-known property called decomposability (see \S \ref{sec:decomposability}.
Therefore, in many cases, we can generate statistics for analytics algorithm $F$ for ``parent'' group of US users from statistics collected for groups of each state.
This insight enables us to 1) avoid storage-heavy solutions that need to store data
for all possible subgroups, 2) narrow down the statistics we store via the replay algorithm $\replayalg{}$ for user groups, 3) choose which groups we require to store the statistics for.
\begin{figure}[th!]
    \centering
    \includegraphics[width=.9\linewidth]{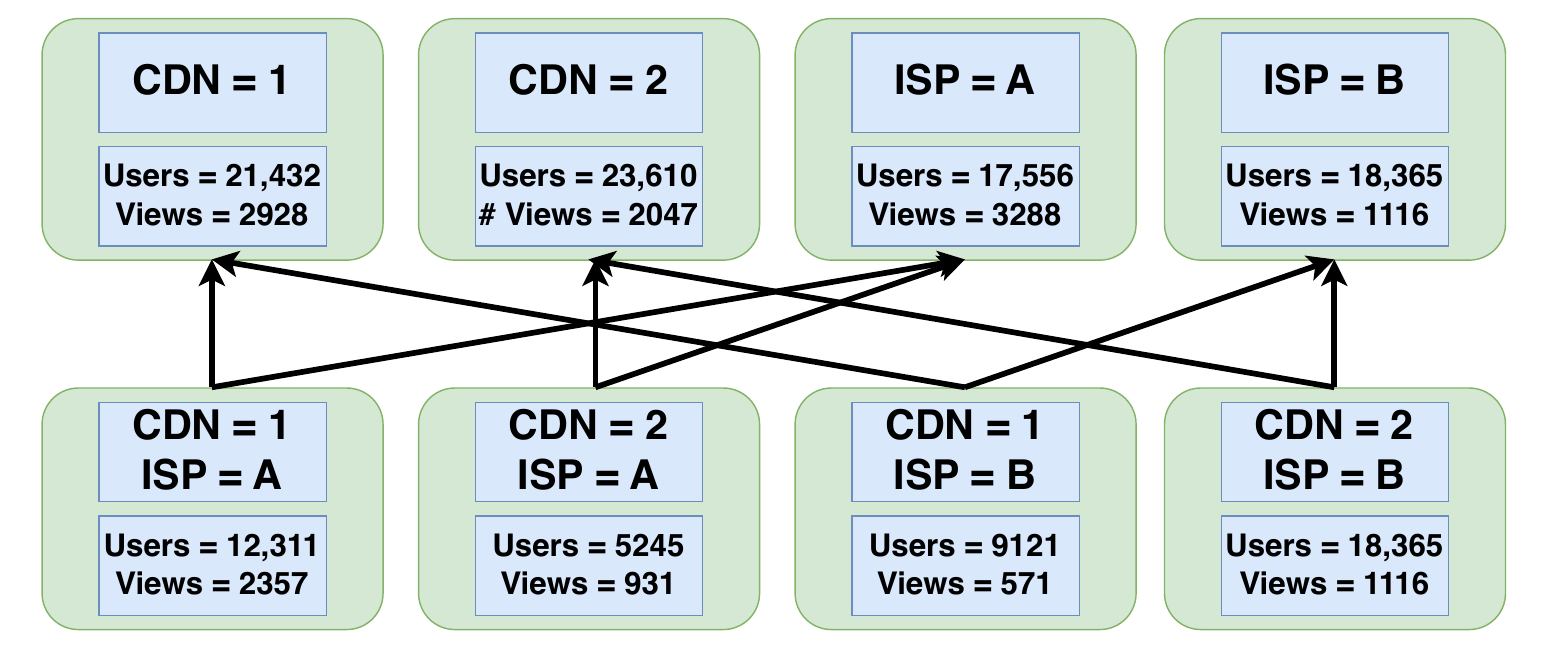}
    \caption{Decomposability: Features of a group can be derived from child groups' features.}
    \label{fig:features}
\end{figure}



Formally, given the session dataset $\dataset{t}$, the final set of inputs to the predictive model is a set of features for \textit{all possible \subgroups} $\mathcal{C}_t$.
The size of $\mathcal{C}_t$ is upper bounded by $O(\prod_{i=1}^{\numattrib} |\mathcal{A}_i|)$ which is an exponential of attributes $\numattrib$.
However, we can generate the necessary statistics $F'$ for the smallest possible set of \subgroups that cover all possible \subgroups called \textbf{LEAF}.
These are \subgroups where all attributes are initialized to a specific value, (i.e., $Leaf(\dataset{}) = \{\cohinit{\mathbf{a}}: \mathbf{a} \in \attrib{1} \times \attrib{2} \times \dots \times \attrib{\numattrib}\}$).
Let $\mathcal{A}_t \subseteq \attribinit{1} \times \attribinit{2}, \dots, \times \attribinit{\numattrib}$
The \model system therefore operates in two stages:

\noindent$\bullet$ \replayalg{} stores the necessary statistics $F'$ of the leaf \subgroups in the replay storage.
\begin{equation}
    \replay{\dataset{t}} = \replayalg{(\dataset{t})} = \bigcup_{\mathbf{a} \in \mathcal{A}_t}F'(\dataset{t, \mathbf{a}}).
\end{equation}

\noindent$\bullet$ \fetchalg{} derives required features $F$ for any user group $C(\mathbf{a})$ from the intermediate features $\bigcup_{\mathbf{a}'\in Child(\mathbf{a})}F'(\dataset{t, \mathbf{a}'})$
where $Child(\mathbf{a})$ are set of leaf groups that are contained in \subgroup $\mathbf{a}$.

\begin{figure}[th!]
    \centering
    \begin{subfigure}[b]{0.48\linewidth}
        \centering
        \includegraphics[width=\textwidth]{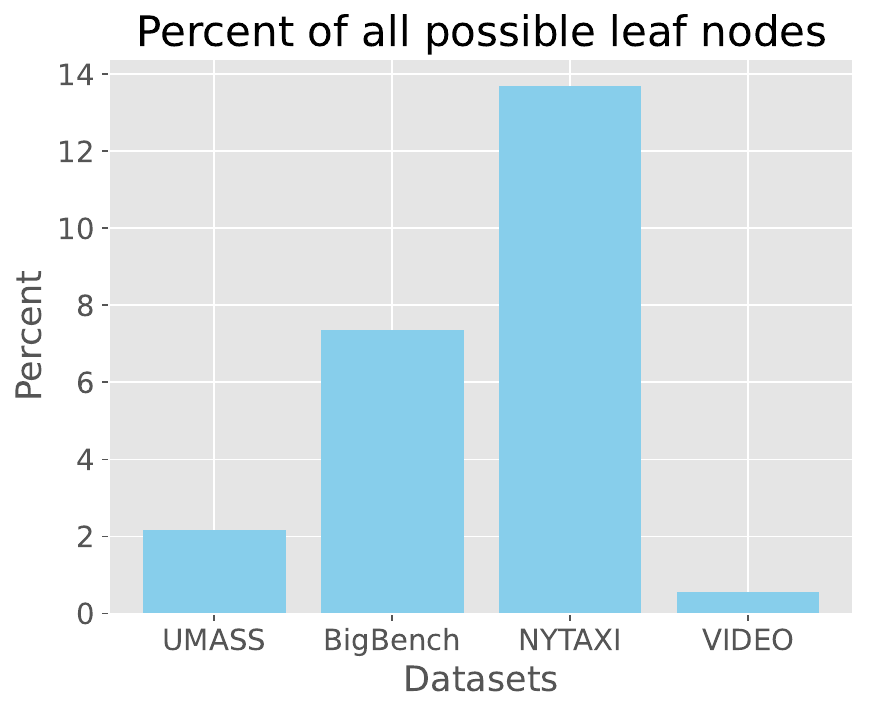}
        \caption{Number of observed leaf groups is smaller than the max. possible leaf groups}
        \label{fig:leafgroupspercent}
    \end{subfigure}
    \hfill 
    \begin{subfigure}[b]{0.48\linewidth}
        \centering
        \includegraphics[width=\textwidth]{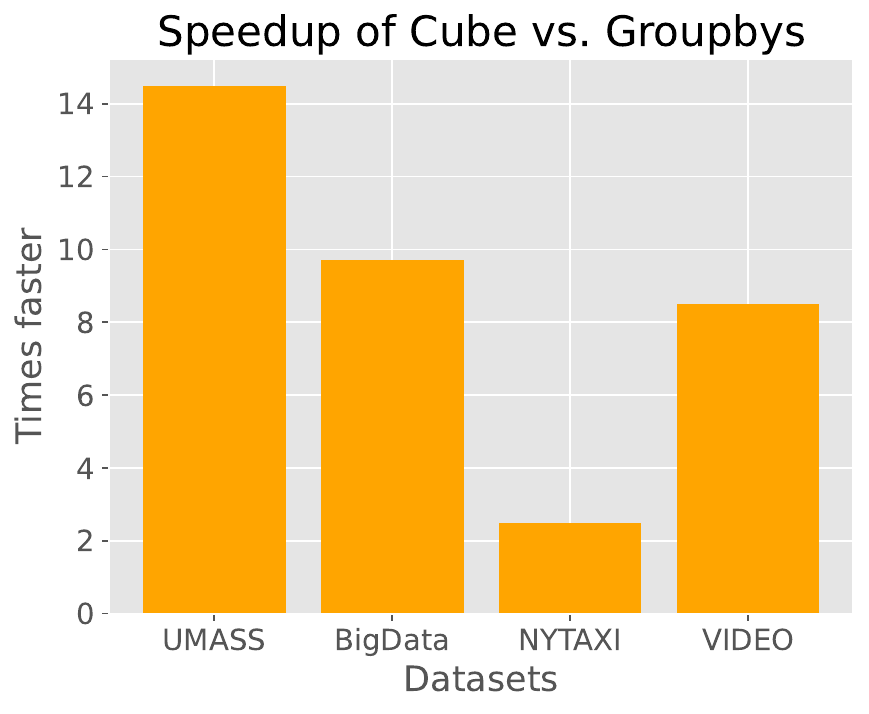}
        \caption{CUBE operation is 3-14 times faster than Groupbys across all datasets}
        \label{fig:cubevsgroupby}
    \end{subfigure}
    \caption{Evidence for subgroup sparsity (Insights \ref{insight:combinationsparsity}) and CUBE efficiency (Insight \ref{insight:cubevsgroupby})}
    \label{fig:scalability_method}
\end{figure}


\begin{insight}{Single node residence due to Active Subgroup Sparsity\\}
    Even though the number of possible subgroups can be exponential in the number of attributes $\numattrib$,
    in any session the number of {\em active subgroups} observed in a given time-step is much smaller.
    \label{insight:combinationsparsity}
\end{insight}



Even with the above insight \ref{insight:decomposability}, however, the total possible number of leaf groups is still potentially exponential in the number of attributes.
In practice, we observe that the number of leaf groups that appear in any given session is significantly smaller, which allows us to fit it into a single node's memory in many cases.
In fact, it is often the case that the number of active user sessions that appear in any given session is likely to fit inside a single node memory.
This enables us to decouple the workflow by storing the necessary information from a much smaller number of LEAF subgroups.
This process usually requires a single-node system that can efficiently process the data during \replayalg{} and \fetchalg{} without the overhead of multi-processor communications.
Therefore, we only store for LEAF groups that have been observed in the session dataset:
$\mathcal{A}_t = \{\mathbf{a} \in \attribinit{1} \times \attribinit{2}, \dots, \times \attribinit{\numattrib} | |\cohinit{\mathbf{a}}| > 0 \}$.
The average relative number of observed leaf groups to the maximum possible number of leaf groups (Figure \ref{fig:leafgroupspercent})
is 0.5\% to 13.7\%  datasets, which is small enough to fit in a machine with 64GB of memory.

\begin{figure}[htb]
    \centering
    \begin{subfigure}[b]{0.45\linewidth}
        \centering
        \includegraphics[width=\textwidth]{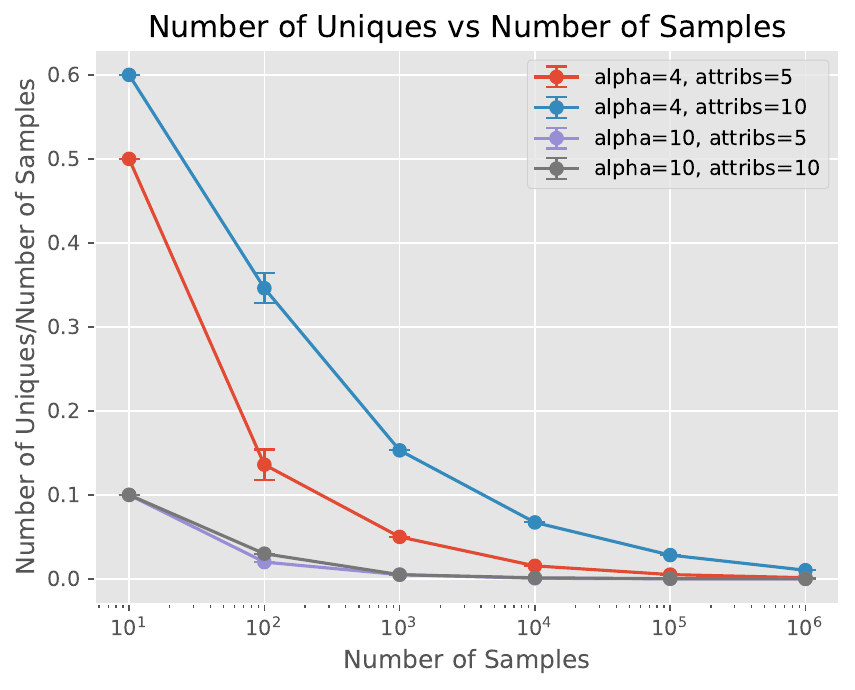}
        \caption{Relative fraction of unique attribute values for different values of $\alpha$ and number of attributes on synthetic data following Zipf distribution}
        \label{fig:synthunique}
    \end{subfigure}
    \hfill
    \begin{subfigure}[b]{0.45\linewidth}
        \centering
        \includegraphics[width=\textwidth]{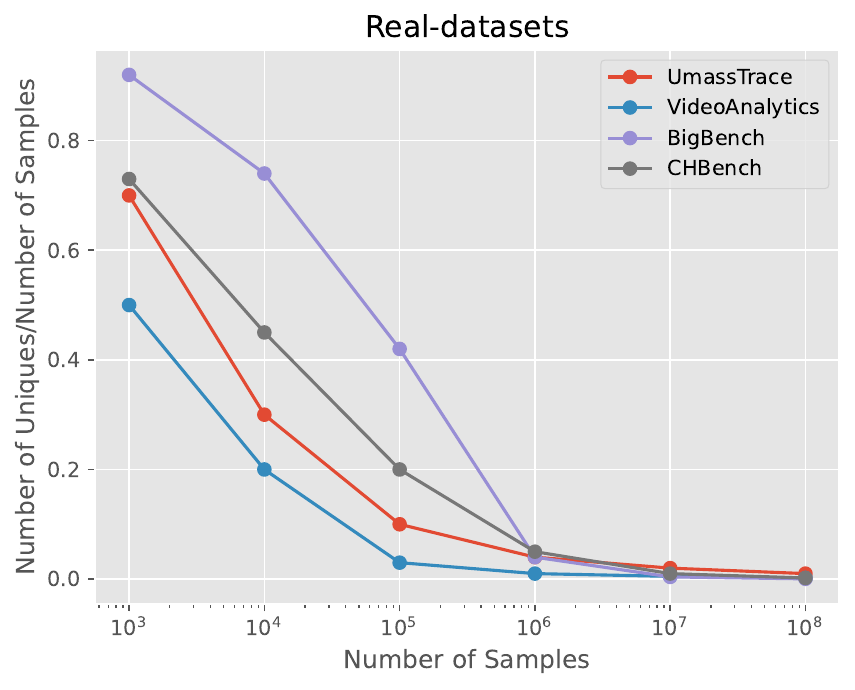}
        \caption{Fraction of unique groups observed in real-world datasets over time across different sample sizes across time}
        \label{fig:realunique}
    \end{subfigure}
    \label{fig:entirefigure}
    \caption{Number of unique sessions (LEAF) increases much slower as sample size increases. We plot the relative fraction of the unique value to the total sample size for synthetic data and real data and see similar patterns of decreasing fractions.}
\end{figure}



\hide{
Assume there are $\mathcal{M}$ attributes each of which are discrete.
We can represent the attributes as Zipfian distributions with parameter $\alpha$ and $V$ unique values.
The distribution is $P(k) = \frac{1}{Z(\alpha, V)}k^{-\alpha}$ where $Z(\alpha, V)$ is the normalizing constant.
Assume each of the attributes follows the same distribution.
Then, the probability of observing a tuple of attributes $\mathbf{a}$ from a single sampled session in a time-step is  is $P(\mathbf{a}) = \prod_{i=1}^{\mathcal{M}} P(a_i) = \frac{1}{(Z(\alpha, V))^D}\prod_{i=1}^{\mathcal{M}} a_i^{-\alpha}$.
\begin{lemma} \harsha{Assume...}
    The expected number of unique attribute values in $n$ samples is $O(n^{\mathcal{M}/\alpha})$ for large values of $\alpha$.
\end{lemma}
\begin{proof}
    For an attribute value $\mathbf{a}$ let $Y(\mathbf{a})$ be a random variable that is 1 if it is not sampled in the session dataset and 0 otherwise.
    \harsha{Denote random var and derive expectation of S(n)}
    Then, the expectatio for the number of unique attribute values in $n$ samples is $S(n) = \mathbb{E}\left[\sum_{\mathbf{a}}(1-Y(\mathbf{a}))\right]$.
    We note that
        $P(Y(\mathbf{a})) = (1- P(\mathbf{a}))^n$.
    Therefore,
        $S(n) = \sum_{\mathbf{a}}(1 - (1- P(\mathbf{a}))^n)$.
        
We can approximate $(1-P(\mathbf{a}))^n \approx \exp(-nP(\mathbf{a}))$.
Then we can upper bound the expectation as,
\begin{align}
    S(n) &= \sum_{a_1}\sum_{a_2}\dots\sum_{a_{\mathcal{M}}}(1-\exp(-nP((a_1, a_2, \dots, a_{\mathcal{M}})))) \\
    & \leq \int_{a_1=1}^\infty\int_{a_2=1}^\infty\dots\int_{a_{\mathcal{M}}=1}^\infty (1-\exp(-nP((a_1, a_2, \dots, a_{\mathcal{M}})))) da_1 da_2 \dots da_{\mathcal{M}}\\
    & = \int_{a_1=1}^\infty\int_{a_2=1}^\infty\dots\int_{a_{\mathcal{M}}=1}^\infty
    \left(1 - \exp\left( -\frac{n}{Z(\alpha)} \prod_{i=1}^{\mathcal{M}} a_i^{-\alpha} \right) \right) da_1 da_2 \dots da_{\mathcal{M}}
\end{align}

Let $w_1 = \ln a_1$ or $a_1 = e^{w_1}$. Then $da_1 = e^{w_1} dw_1$
\begin{equation}
    S(n) \leq C \int_{w_1=0}^1 \int_{a_2=1}^\infty\dots\int_{a_{\mathcal{M}}=1}^\infty \left(1-\exp\left(-\frac{n e^{-\alpha w_1} \prod_{i=1}^{\mathcal{M}} a_i^{-\alpha}}{Z(\alpha)} \right)\right) e^{w_1} dw_1 da_2 \dots da_{\mathcal{M}}.
\end{equation}
Similarly setting $w_2 = \ln a_2, w_3 = \ln a_3, ...$, we get:
\begin{equation}
    S(n) \leq \int_{w_1=0}^\infty \int_{w_2=0}^\infty\dots \int_{w_{\mathcal{M}}=0}^\infty \left( 1-\exp\left( -\frac{ne^{-\alpha \sum_{i=1}^{\mathcal{M}}w_i}}{Z(\alpha)} \right) \right) 
    e^{\sum_{i=1}^{\mathcal{M}}w_i} dw_1 \dots dw_{\mathcal{M}}
\end{equation}
Let $w = \sum_{i=1}^{\mathcal{M}} w_i$.

\begin{equation}
    S(n) \leq  \int_{w=0}^\infty \left( 1 - \exp\left(-\frac{n}{Z(\alpha)}e^{-\alpha w} \right) \right)\frac{w^{\mathcal{M}-1}}{(\mathcal{M} -1 )!}dw
\end{equation}

Let $t=e^{-\alpha w}$ and we get $dw = -\frac{dt}{\alpha t}$.

\end{proof}
}

\begin{insight}{CUBE operations are  efficient for data in memory}

    When the dataset can fit inside the resident memory of a single machine,  using  {\tt CUBE} operations in traditional analytical databases can outperform workflows that run  {\tt GROUPBY} operations per subgroup in general-purpose compute engines such as Spark.
    \label{insight:cubevsgroupby}
\end{insight}

A typical solution to compute statistics on subgroups on demand is via {\tt GroupBy} type operations (e.g., atop  Spark or SQL) on a multi-node cluster.
Note, however, in \model, our previous insight \ref{insight:combinationsparsity} means that we can store the necessary statistics for all leaf groups in a single node. In this setting, we can use a more efficient {\tt CUBE}  operator~\cite{gray1997data}. The 'CUBE' operator in SQL and other data processing systems is a special type of GROUP BY capability that generates aggregates for all combinations of values in the selected columns.
It is well studied in database and big data literature~\cite{jiang2016catchtartan,vassiliadis1998modeling}.
Indeed, modern  OLAP engines such as Clickhouse have native support for CUBE operations.
While {\tt CUBE}  operations could be expensive in a distributed setting with a lot of data movements, in a single node memory resident setting it is efficient to derive the required statistics of {\em all possible combinations}  of attribute values of all parent groups from the intermediate statistics of leaf groups (Figure \ref{fig:cubevsgroupby}).



\begin{figure}[htb]
    \centering
    \includegraphics[width=0.8\linewidth]{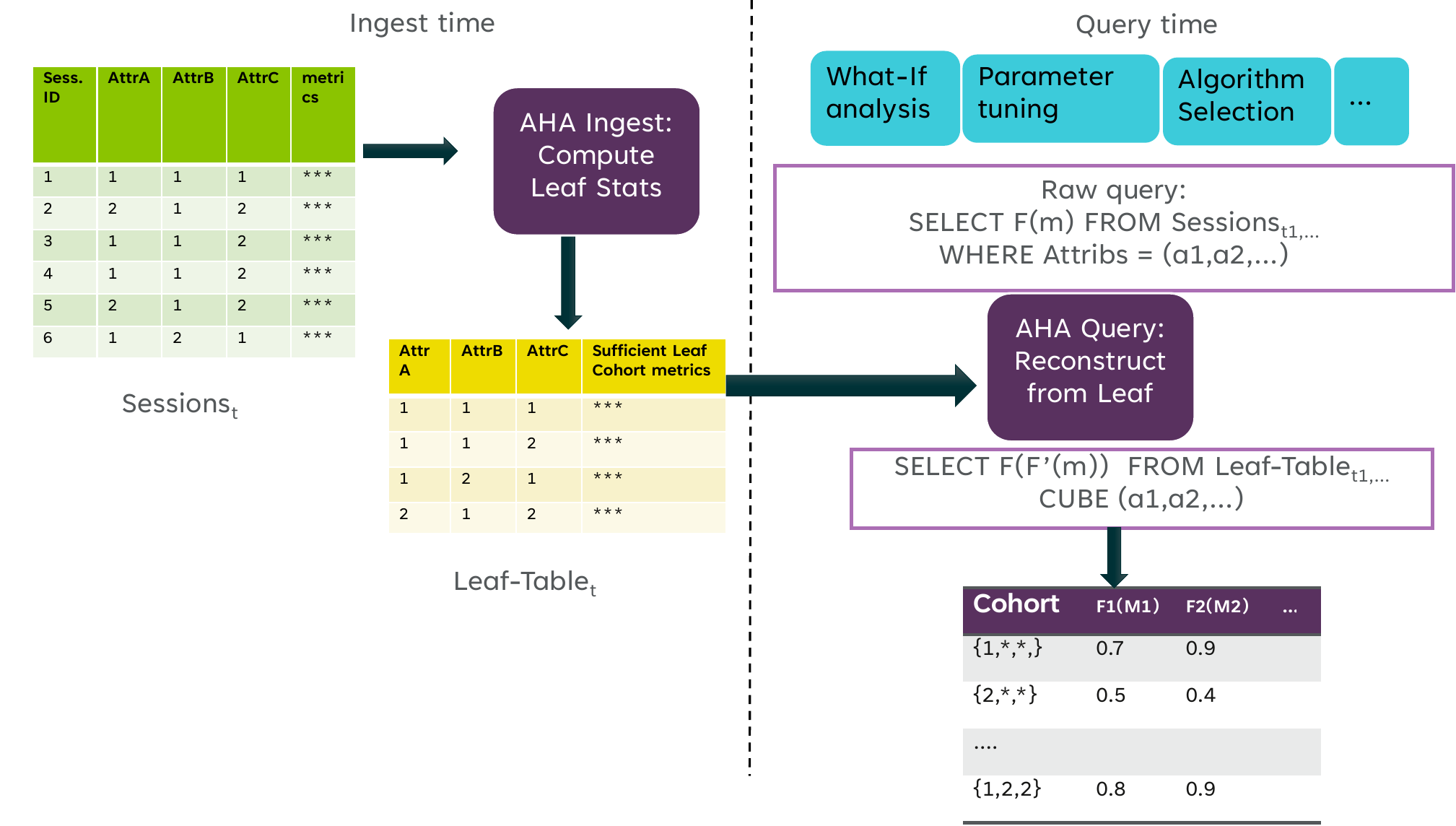}
    \caption{
    \model computes only sufficient metrics for the small number of leaf cohorts during ingest and required metrics later for other cohorts via the efficient CUBE operation.}
    \label{fig:AHAImple}
\end{figure}

\subsection{Design and Implementation  }

Combining the insights, we develop the following workflow for \model consisting of two logical stages (Fig. \ref{fig:AHAImple}):
(1) At ingest, we compute different task-relevant metric statistics of interest per {\em LEAF}.
(2) On query time,  when the alternative history analysis is needed, we run a per-epoch {\em CUBE} over the per-LEAF statistics.
Assume we need to compute statistics $F$ for predictive algorithm for all cohorts. As specified in The equivalent SQL query is:
\begin{equation}
    \text{SELECT } F(\mathbf{\{m_i\}}_{i\in S}) \text{ FROM sessions WHERE attributes = } \mathbf{a}
\end{equation}
for all possible cohorts $\mathbf{a} \in \in \attrib{1} \cup \{*\} \times \attrib{2} \cup \{*\} \times \dots \times
    \attrib{\numattrib} \cup \{*\}$ which is exponential of attributes $\mathcal{M}$.
Next we will convert this series of queries to a more efficient two-step process proposed in \model
with significantly better storage and compute cost.

\paragraph{Computing LEAF metrics during ingest} We compute the necessary statistics $F'$ of all leaf subgroups from session metric data.
The corresponding query is:
\begin{equation}
\begin{split}
    &\text{CREATE VIEW {\tt leaf-table} as SELECT } F'(\mathbf{\{m_i\}}_{i\in S})
    \\ & \text{ FROM sessions GROUP BY } (\attrib{1}, \attrib{2}, \dots, \attrib{\mathcal{M}})
    \end{split}
    \label{eqn:leafquery}
\end{equation}
The query aggregates all the cohorts in $\{\attrib{1}, \attrib{2}, \dots, \attrib{\mathcal{M}}\}$
{\em observed in the sessions}. As it is much smaller than the possible number of cohorts (Insight \ref{insight:combinationsparsity}), the result is a much
smaller summary dataset.

\paragraph{Generate metrics for any cohort from LEAF metrics during prediction}
We use the metrics of leaf groups in the {\tt leaf-table} to compute sufficient
statistics $F'$ for any  cohorts. For all possible cohorts, we can use the CUBE
operation available in most data processing platforms like Clickhouse and Spark as:
\begin{equation}
    \text{SELECT } F'(F(m)) \text{ FROM {\tt leaf-table} CUBE } (\attrib{1}, \attrib{2}, \dots, \attrib{\mathcal{M}})
\end{equation}
Alternatively, we can also select $F'$ for select combinations of attributes $A'_j \subset \{\attrib{1}, \attrib{2}, \dots, \attrib{\mathcal{M}}\} $ using grouping sets implemented in many databases as:
\begin{equation}
\begin{split}
    \text{SELECT } F'(F(m)) &  \text{ FROM {\tt leaf-table} GROUP BY} \\
    & \text{GROUPING SETS } (A'_1, A'_2, \dots)
\end{split}
\end{equation}
This step is faster than iterating over all possible combination of cohorts due to efficient implementation of CUBE and GROUPING SETS operations in OLAP platforms (Insight \ref{insight:cubevsgroupby}).\footnote{Anonymized code:\url{https://anonymous.4open.science/r/AHA_KDD25-3B63/}}.



\subsubsection{Real-time deployment details}
The insights above enabled successful deployment of \model in data engineering pipelines
where large amount of raw data is ingested in real-time from video sessions across major streaming providers. This is used in core data processing pipelines and  alerts system to identify anomalies in time-series across all possible subgroups of users and identify the largest subpopulations and key attributes that are effected by the anomalies.
The system used Clickhouse as the data processing engine and database for its efficiency
on key operations like CUBE.

\subsection{Soundness of \model}
\label{sec:decomposability}

We characterize the kinds of statistics we can derive using our design.
Let the metric data for any user at any given time be an element of set $\metricset$.
The data for a \subgroup of users is a finite subset of $\metricset$.
A statistic is a function $f: 2^{\metricset} \rightarrow \mathrm{O}$ that maps metric data of a \subgroup
to output domain $\mathrm{O}$. For example, we define mean statistic as $mean = \frac{1}{|\chset{}|}\sum_{m \in \chset{}} m$ which maps to a single scalar.

\begin{definition}[Self-decomposable statistic]
    Let $f$ be a statistic. $f$ is self-decomposable if for any finite set $\mathbf{M}_0 \in 2^{\metricset}$ and any finite disjoint partitioning of $\mathbf{M}_0$ as $\{\mathbf{M}_1, \mathbf{M}_2, \dots, \mathbf{M}_N\}$ (i.e., $\mathbf{M}_i \cap \mathbf{M}_j = \phi$ for any $1\leq i < j \leq N$ and $\bigcup_{i=1}^N \mathbf{M}_i = \mathbf{M}_0$), we have:
    \begin{equation}
        f(\mathbf{M}_0) = f(\{f(\mathbf{M}_1), f(\mathbf{M}_1), \dots, f(\mathbf{M}_N)\}).
    \end{equation}
\end{definition}

For example, \texttt{SUM} is a self-decomposable statistic which adds the metric of given \subgroup. Trivially, the sum of a metric in a \subgroup can be derived as the sum of the sums of the metric in each of the individual mutually disjoint partitions of the \subgroup.


\begin{definition}[Decomposable statistic]
    $f$ is a decomposable statistic w.r.t a set of statistics $N(f) = {f_1, f_2, \dots, f_k}$ if there exists
    a function $A_f$
    such that for any finite set $\mathbf{M}_0 \in 2^{\metricset}$ and
    any finite disjoint partitioning of
    $\mathbf{M}_0$ as $\{\mathbf{M}_1, \mathbf{M}_2, \dots, \mathbf{M}_N\}$ we have:
    \begin{equation}
        f(\mathbf{M}_0) = A_f(\{f_1(\mathbf{M}_i)\}_{i=1}^N, \{f_2(\mathbf{M}_i)\}_{i=1}^N,\dots, \{f_k(\mathbf{M}_i)\}_{i=1}^N)
    \end{equation}
\end{definition}
In simple terms, a decomposable statistic is a statistic that can be derived as a function of some statistics of
its partitions.
For example, the statistic \texttt{MEAN} can be derived as the \texttt{SUM}
of values in each partition as well as \texttt{COUNT}, i.e, cardinality of each partition.
Decomposable statistics cover a wide range of statistics found in operational pipelines including in our deployment applications. 
\model
support wide range of predictive analytics that require decomposable statistics like mean, sum, range,
variance and other order moments used in commonly used algorithms.
Important non-decomposable metrics can be computed using well-approximated algorithms.
For example, we can compute Histogram sketches that are decomposable are stored for each leaf group and can be
used to compute a wide range of statistics.
Median and other quartiles can be estimated approximately~\cite{greenwald2001space}.
They allow pre-computing intermediate results from different partitions in distributed systems~\cite{yu2009distributed,jesus2014survey,zhang2017symmetric} to reduce the amount of communication across nodes and for faster implementations of these functions in settings such as OLAP databases where directly querying from raw data is expensive.
Moreover, the function $A_f$ is usually a composition of simple aggregation functions such as addition, multiplication, etc. and elementary functions such as square root, logarithm, etc. which are efficiently
implemented in modern databases.

Specifically, we derive the necessary statistics for the smallest possible set of \subgroups that cover all possible \subgroups called \textbf{leaf}. These are \subgroups where all attributes are initialized to a specific value, (i.e., $Leaf(\dataset{}) = \{\cohinit{\mathbf{a}}: \mathbf{a} \in \attrib{1} \times \attrib{2} \times \dots \times \attrib{\numattrib}, |\cohinit{\mathbf{a}}| > 0\}$).
The replay storage consists of $\replay{\dataset{t}} = \{(\mathbf{a}, F'(\dataset{t, \mathbf{a}})): \mathbf{a} \in Leaf(\dataset{t})\}$. here, $F'$ is the set of necessary statistics for the features $F$ of the algorithm: $F' = \bigcup_{f \in F} N(f)$.
This allows us to provide a theoretical guarantee of perfect predictive equivalence with significant real-wprld storage cost reduction.

\begin{theorem}
    If the anomaly detection algorithm $\mathcal{M}$ requires features $F$, all of which are decomposable, then, we can have perfect predictive equivalence by storing only the necessary statistics $F'$ of the leaf \subgroups in the replay storage.
\end{theorem}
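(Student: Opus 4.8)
The plan is to reduce the claim to a single application of the Decomposable-statistic definition, using the observation that every cohort is an exact disjoint union of leaf cohorts. First I would establish the combinatorial backbone. Since each attribute takes discrete values and each session $j$ carries a fully specified attribute tuple $\attribinit{j} \in \attrib{1} \times \cdots \times \attrib{\numattrib}$, every session belongs to exactly one leaf cohort. Consequently, for any query cohort $\cohinit{\mathbf{a}}$ (where $\mathbf{a}$ may contain $*$ entries), the session set $\dataset{t,\mathbf{a}}$ equals the union of the $\dataset{t,\mathbf{a}'}$ over all observed leaf tuples $\mathbf{a}' \in \mathcal{A}_t$ that agree with $\mathbf{a}$ on its non-$*$ coordinates, and this union is disjoint because distinct leaf tuples index disjoint session sets. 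This is precisely the ``finite disjoint partitioning'' hypothesis required by the definition of a decomposable statistic.

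Next, I would fix any feature $f \in F$. By assumption $f$ is decomposable w.r.t. $N(f) = \{f_1,\dots,f_k\}$, so there is an aggregator $A_f$ with $f(\dataset{t,\mathbf{a}}) = A_f(\{f_1(\dataset{t,\mathbf{a}'})\}, \dots, \{f_k(\dataset{t,\mathbf{a}'})\})$, where the index $\mathbf{a}'$ ranges over the leaf partition established above. Because the replay storage records $F'(\dataset{t,\mathbf{a}'}) = \bigcup_{g \in F} N(g)$ for every observed leaf $\mathbf{a}' \in \mathcal{A}_t$, it in particular contains every $f_1(\dataset{t,\mathbf{a}'}), \dots, f_k(\dataset{t,\mathbf{a}'})$. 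Since these observed leaves exactly partition $\dataset{t,\mathbf{a}}$, the fetch stage can evaluate $A_f$ exactly and recover $f(\dataset{t,\mathbf{a}})$ with no approximation. Ranging over all $f \in F$ yields the cohort-level equality $F(\dataset{t,\mathbf{a}}) = F(\replay{\dataset{t}})$ restricted to cohort $\mathbf{a}$, i.e., strong (exact) cohort equivalence rather than the $\approx$ of the general definition.

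Finally, I would lift cohort equivalence to predictive equivalence. Since the feature vectors fed to the model are identical, $\mathcal{M}(F(\dataset{t,\mathbf{a}}); \theta_M) = \mathcal{M}(F(\replay{\dataset{t}}); \theta_M)$ for every $\theta_M$, and therefore $\alg{F(\dataset{});\theta} = \alg{F(\replay{\dataset{}});\theta}$ for all $\theta$ and all algorithms $\alg{} \in \mathrm{Alg}$ whose features are decomposable. Replacing $\approx$ with $=$ is exactly the strongly-equivalent guarantee claimed by the theorem.

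The main obstacle is the partition step, not the algebra: one must argue carefully that the observed leaf cohorts are genuinely disjoint and genuinely exhaust $\dataset{t,\mathbf{a}}$, so that the single invocation of the decomposability identity is legitimate for arbitrary $N$. The discreteness assumption on attributes and the ``all attributes specified'' definition of a leaf are what make this clean. If one instead tried to compose the reconstruction hierarchically through non-leaf children, one would additionally need each $f_j \in N(f)$ to be itself reconstructible at intermediate levels (a closure/self-decomposability requirement); the direct leaf-to-cohort partition sidesteps this entirely, which is why I would prove the statement in one step rather than by induction over the cohort lattice.
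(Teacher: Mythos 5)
Your proof is correct and follows essentially the same route as the paper: the paper states the theorem without a formal proof, but its construction (storing $F' = \bigcup_{f \in F} N(f)$ per observed leaf and rolling up any cohort via its disjoint leaf partition through the aggregator $A_f$) is exactly the argument you formalize. Your explicit verification that observed leaf cohorts genuinely partition each query cohort, and the final lift from exact feature equality to exact predictive equivalence, simply make rigorous what the paper leaves implicit.
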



\section{Evaluation}
\label{sec:results}

\noindent \textbf{Baselines:}
We compare \model against the following alternative data-processing solutions.
Strong equivalence solutions included are:
  \noindent$\bullet$   \rawingest: A common solution where we store the full raw data during ingest and
        fetch required features during prediction. 
  \noindent$\bullet$   \kvstore: Similar to \cubeingest, we store only the outputs via a hash map that maps the attributes of groups to stored metrics. We implement the key-value system by extending the hash-map implementation in the Rust programming language.
Solutions with weak equivalence are:.
  \noindent$\bullet$   \unifsample: We sample a small fraction of the user samples for each time-step and fetch required statistics from the samples at fetch.
  \noindent$\bullet$  \sketchsum: We also use a state-of-art sketching solution \textit{Hydra}~\cite{manousis2022enabling} designed for sub-population analytics uses a universal sketch to summarize summary statistics.

\noindent \textbf{Dataset:}
We evaluate \model on multiple datasets from different applications. Statistics on the datasets are discussed in Appendix Table \ref{tab: datasets}.
  \noindent$\bullet$  \conviva: Real-world user video analytics dataset observed at live deployment at a major video-streaming analytics provider .
  \noindent$\bullet$ \trace: Network trace dataset from~\cite{NetworkU23:online}.
  \noindent$\bullet$ \taxi: Contains the pickup and dropoff times and other statistics provided by New York Taxi and Limousine Commission~\cite{TLCTripR1:online}.
  \noindent$\bullet$ \chbench: Anonymized DB log traces data provided by Clickhouse
  \noindent$\bullet$ \mgbench: Synthetic benchmark from \cite{ben2017mgbench} originally intended to test GPU performance on data i/o and simple computational workflows
  \noindent$\bullet$ \imdb: Dataset from IMDB containing a list of movies and various features including year of release, actors, language, etc.

\noindent \textbf{Benchmarking details}
We analyze the ability of \model and baselines to support standard prediction pipelines.
We consider anomaly detection algorithms used in production: standard deviation thresholds, $k$ nearest neighbors (\knn)~\cite{amer2012nearest}
and Isolation Forest (\isof)~\cite{liu2008isolation}. All these algorithms use mean of the metrics observed across the users as features.

We measure the fidelity of each of the baselines by measuring both the accuracy of the aggregate metric (metric accuracy) and of the anomaly detection algorithm (task accuracy).
Metric accuracy is measured as RMSE score and task accuracy as classification accuracy of anomalies.
Note that the accuracy here refers to how closely the predictions derived from a given system matches the predictions using raw data.
Methods that compute the metrics precisely including AHA will have 100\% accuracy while approximate methods may have lower accuracy.
To determine the total cost of the system (compute and storage), we measure the cost of performing this operation over a period of a month
on a EC2 system of similar configuration via AWS using S3 to store the
summarized data for each solution. We provide more details on the setup, cost model and parameter settings of the baselines in Appendix \S \ref{sec:benchmark}.

\begin{figure}[htbp]
  \centering
  \begin{subfigure}[b]{0.475\linewidth}
  \centering
    \includegraphics[width=.98\linewidth]{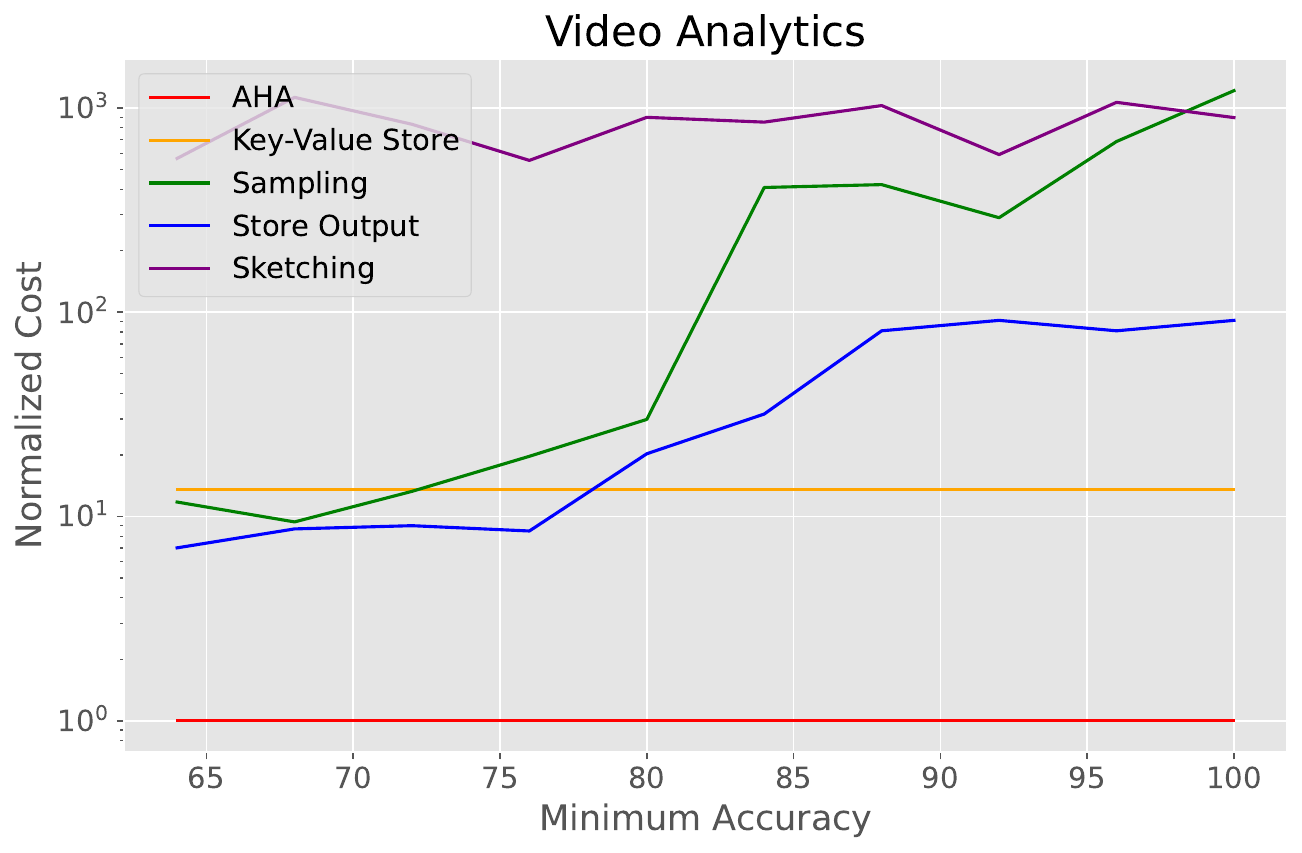}
    \caption{Cost vs Metric accuracy}
  \end{subfigure}%
  \hfill
  \begin{subfigure}[b]{0.475\linewidth}
  \centering
    \includegraphics[width=.98\linewidth]{RoughFigs/q3/videoanalytics.pdf}
    \caption{Cost vs Task accuracy}
  \end{subfigure}%
  \caption{Normalized Cost for different minimum task and metric accuracy requirements. 
  \model is 55-130 times more efficient among strong equivalence solutions and the cheapest solutions for all datasets with over 80\% metric accuracy requirement.  
  \model is lowest cost across all task accuracy requirements (> 90\%) with 8.6-87 times lower cost.
  }
  \label{fig:subfigures_task}
\end{figure}

\begin{figure}[h]
  \centering
  \includegraphics[width=.69\linewidth]{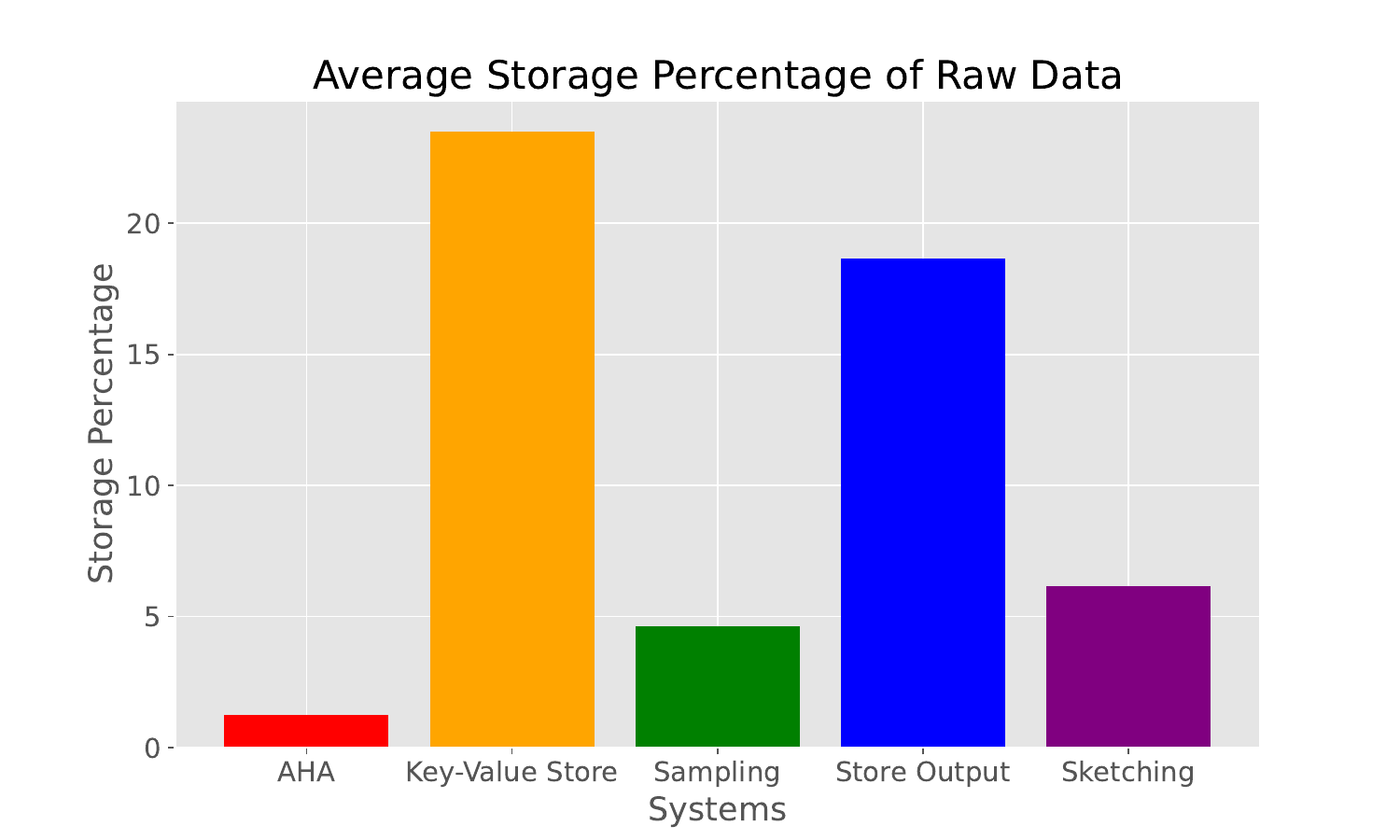}
  \caption{Percentage of storage of raw data. \model has the lowest storage requirements over other solutions.}
  \label{fig:storage_percent}
\end{figure}

\subsection{Results}
\par \noindent \textbf{Accuracy and Cost trade-off:}
We show the cost vs. accuracy (fidelity) trade-off in Fig. \ref{fig:mainfig} and \ref{fig:subfigures_task} for \conviva and others in the Appendix Fig. \ref{fig:three_subfigures}.
First, we observe, as expected that \model, \rawingest, \kvstore and \cubeingest show 100\% fidelity.
\unifsample has an average accuracy of 71\% whereas \sketchsum has an
average accuracy of 88\%.
Further, the approximation methods have large variance in accuracy across groups and time.
The 90 percentile average accuracy of \unifsample and \sketchsum
are 27\% and 53\% respectively.
\model provides 34-85 times lower cost than the standard solutions of storing raw data or outputs features.
It also provides 6-10 times lower cost than \sketchsum.
Since the compute cost per hour is costlier than the storage cost per GB per month, the storage cost is 3-4 orders of magnitude lower than the compute cost.
We measure the average reduction in storage in Fig. \ref{fig:storage_percent}.

\par \noindent \textbf{Scalability of \model:}
We study the scalability of \model and other solutions along two aspects.
First, we study the effect of an increase in the number of subgroups and user attributes.
This increases the number of leaf groups and the complexity
of the CUBE operation.
Approximation methods also include an increase in the number of groups with a smaller number of samples.
We use the synthetic data \synth where the attributes are generated
using a Zipf's distribution which mimics the average distribution of attributes in \conviva.
We let the number of user samples be the same as \conviva and arbitrarily increase the number of attributes to simulate the case where we have access to a larger number of user attributes, a common situation in video analytics.

\begin{figure}[h]
  \centering
  \begin{subfigure}[b]{0.48\linewidth}
    \centering
    \includegraphics[width=\textwidth]{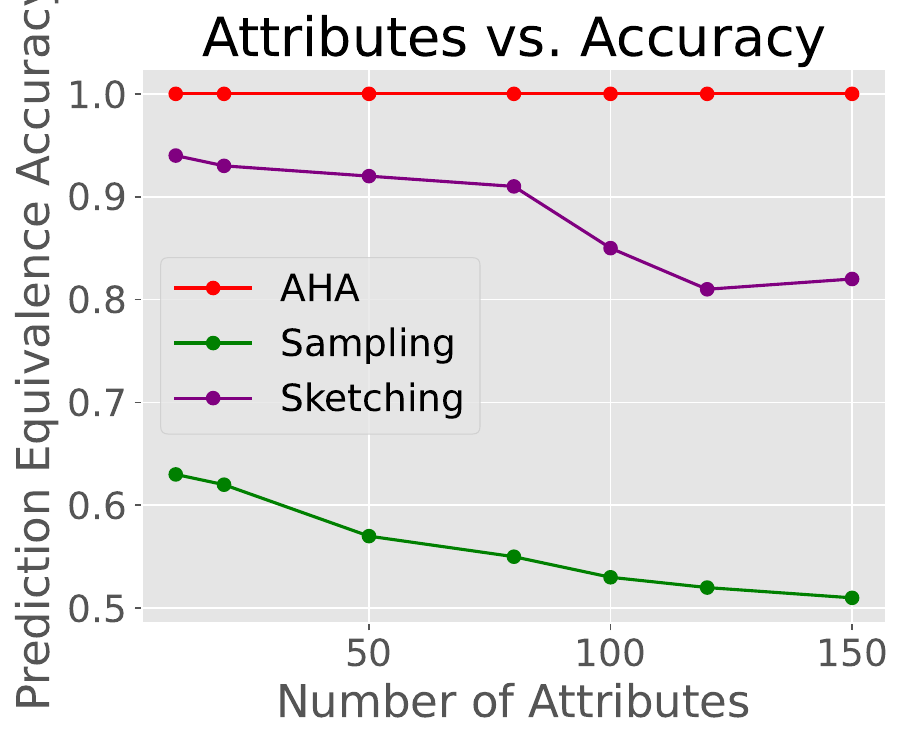}
    \caption{Scalability for accuracy}
    \label{fig:first_scale}
  \end{subfigure}
  \begin{subfigure}[b]{0.48\linewidth}
    \centering
    \includegraphics[width=\textwidth]{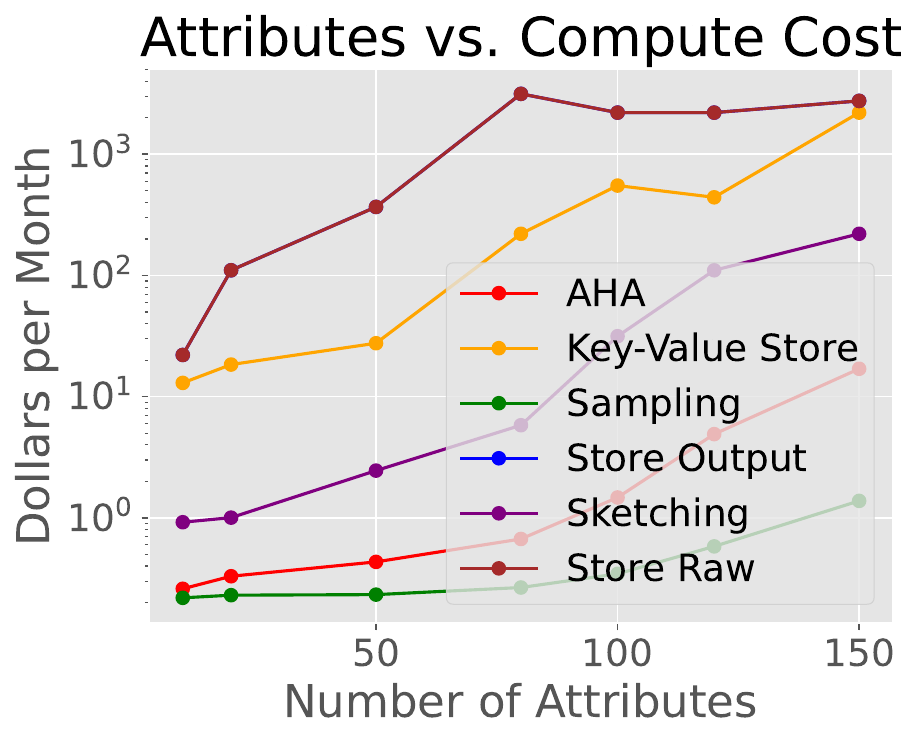}
    \caption{Scalability for cost}
    \label{fig:second_scale}
  \end{subfigure}
  \caption{\model provides cost scalability with increase in attributes while maintaining perfect accuracy.}
  \label{fig:scalability}
\end{figure}

The accuracy of approximate methods decreases with an increase in attributes (Fig. \ref{fig:first_scale}) due to
an increase in low sample groups. The compute cost
is two orders of magnitude lower than other perfectly accurate solutions as well as significant lower than sketching-based solution.
\model scales horizontally with no. of workloads.
We measure the scalability across multiple parallel workloads such as
when we have to perform analytics
for many customers simultaneously.
We measure the total cost as we increase the number of parallel workloads running each task on a node.
\model scales better with an increase in the number of workloads than other perfect replay solutions as well as the sketching solution (Fig. \ref{fig:parallel}).

\begin{figure}
  \centering
  \includegraphics[width=.54\linewidth]{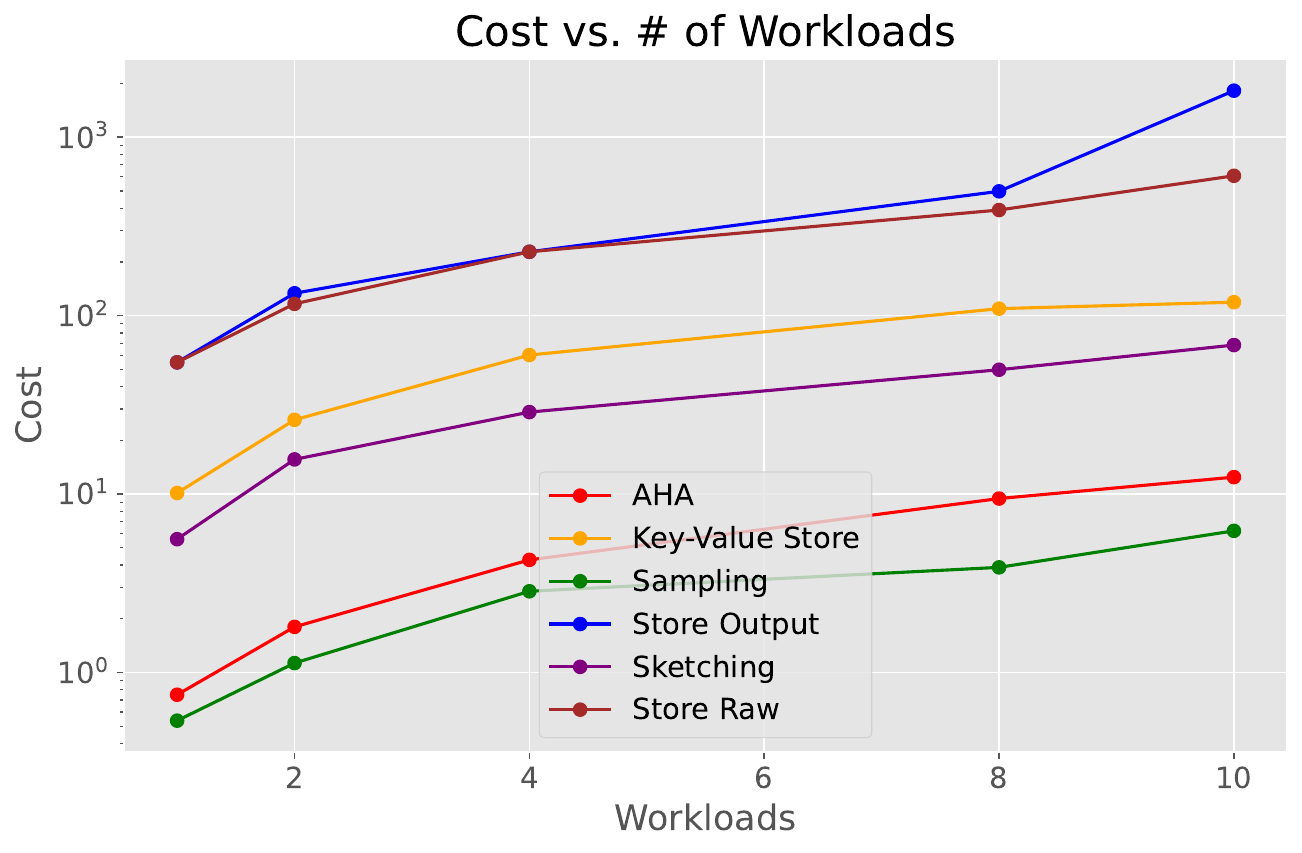}
  \caption{Cost over the number of workloads. \model scales efficiently with the number of workloads.}
  \label{fig:parallel}
\end{figure}


\subsection{Deployment Experience}
We showcase the practical effectiveness of \model by providing a detailed
study of one of the core application in production.
The application involves user journey analytics on a mobile application used to stream video data. This study evaluates preprocessing and analytics performance tracking user navigation patterns.

\noindent
\emph{Setup:} While the total production dataset involves ingesting and processing hundreds of GB of data per minute, we look at a subset of dataset that focus on 10 user attributes measured over 30 days.
We analyze a 30-day subset of production data focusing on 10 user attributes. This subset contains 26.17 GB of logs across 95,408,549 user sessions. We evaluate three key user engagement metrics: (i) successful seek operations, defined as playback resumption after a user-initiated seek, (ii) the number of recommended videos on the homepage that were played, and (iii) the number of carousel suggestions clicked.
The quality of experience (QoE) metric counts these successful operations, grouped per minute during preprocessing. Since the metric of interest is the total number of operations per user group, we sum them across sessions—a decomposable operation that \model optimizes. Unlike the baseline’s repeated single-purpose aggregations using GROUP BY on raw data, \model uses a two-phase approach: creating a LEAF table with decomposable statistics (sum + count), followed by efficient CUBE-like roll-ups from pre-computed statistics rather than rescanning raw data.

\noindent
\emph{Performance comparison.} \model achieves a 1.25$\times$ speedup in per-minute data aggregation while producing identical output (45,454 cohorts). For the entire data, \model completed the operation in 2:30 hrs, while the baseline needed 3:08 hrs.
For downstream regression analysis using QoE metrics, \model demonstrates significant performance improvements on the processed data. Both methods achieve identical model quality (R² = 0.97), but \model's optimized data representation enables 6.2$\times$ faster query execution. This performance gain becomes crucial when scaling to such large scale datasets.
This example validates that \model maintains analytical accuracy while providing substantial performance improvements in both data preprocessing and query execution phases of the analytics pipeline.

\section{Conclusions}
Operational timeseries problems, characterized by high data volume, cardinality, and constant change, force practitioners to compromise cost and fidelity in retrospective analytics. This paper addresses these challenges by proposing a practical and rigorous solution: \model.
  \model offers a practical alternative where we can achieve {\em both} cost efficiency and perfect accuracy by  leveraging structural characteristics of the data,  query workloads, and modern analytical databases.    
 Our experiments and case-study shows \model provides 34-85 times less total cost of ownership without any loss in accuracy which translates to a cost savings of over \$0.7M per month. We observed about 75\% cost savings for important applications we deployed
    We believe our work is a first but significant step in  identifying and tackling a  practical operational problem of alternative history analytics. 

\textbf{Acknowledgements:} This work was supported in part by NSF (Expeditions CCF-1918770, CAREER IIS-2028586, Medium IIS-1955883, Medium IIS-2403240, Medium IIS-2106961, PIPP CCF-2200269), NIH (1R01HL184139), CDC MInD program, Meta faculty gifts, Dolby gift and funds/computing resources from Conviva and Georgia Tech.

\bibliographystyle{ACM-Reference-Format}
\bibliography{sample}
\appendix
\noindent\textbf{\Large Supplementary for the paper "\model:  Scalable  Alternative History Analysis for Operational Timeseries Applications"}



\section{Applicability of \model to different algorithms}

One trivial sample statistic that can be used to derive any sufficient statistic is the set of
user data samples themselves (i.e. we store simply store the full raw dataset).
However, due to storage cost, we cannot support such algorithms~\cite{lohrer2023gadformer,xiong2011group}
that need to ingest all the samples.
We, therefore, review commonly used sample statistics and characterize their decomposability and applicability
to \model framework.

We can calculate the mean of any parent \subgroup by using the count and sum of children \subgroups
which are self-decomposable.
Therefore, we can support the mean for any \subgroup.
We cannot accurately calculate the median of the \subgroup population without all the samples.
However, there are reliable approximation algorithms that involve storing frequency values of histograms
of the samples~\cite{shrivastava2004medians}. 
A similar method can also be used to approximate estimates of sample quantiles and percentiles.
Variance and standard deviation of a sample can be estimated from the mean, count, and sum of squares
all of which are decomposable. Similarly, we can calculate higher-order moments by tracking the sum of
sample values raised to appropriate exponents.
Similar to median and quartiles, we cannot accurately estimate the interquartile range but can leverage
approximate estimates of quartiles.
Since maximum and minimum are self-decomposable statistics, we can accurately estimate the range.
Most standard benchmarks for standard temporal anomaly detection do not make any assumptions on the type of features used for generating the time-series or use simple statistics like mean or sum of observations from a group of sensors~\cite{darban2022deep}.


\section{Benchmark Setup Details}
\label{sec:benchmark}
We evaluate all the baselines on a standard compute setup consisting of a 64-core Intel Xeon CPU and 256 GB of available main memory.
The input data is stored as a parquet file on the single node for each time step.
The summarized data is stored as a compressed CSV file for \model (we use the fast zstd compression). To fetch the relevant statistics for prediction the CSV file is extracted, and then we roll up to evaluate the statistics for the required groups.

\subsubsection{Anomaly detection algorithms:}
To analyze the ability of \model and baselines to support a wide range of prediction pipelines
we consider different anomaly detection algorithms.
We use three diverse anomaly detection algorithms that are being used widely in research and practice~\cite{blazquez2021review,chandola2009anomaly,han2022adbench}.  We first evaluate with a simple baseline that detects if any metric's average value is 3 standard deviations away from the historical mean (\meansigma)~\cite{su2019robust}.
Finally, we use two other popular anomaly detection algorithms: $k$ nearest neighbors (\knn)~\cite{amer2012nearest}
and Isolation Forest (\isof)~\cite{liu2008isolation}.
All these algorithms use mean of the metrics observed across the users as features to detect the anomalies.
We do not choose complex neural algorithms since the time required to compute the anomalies far exceeds the time for the replay system to ingest, store and fetch the data. However, they also use similar statistics used by these algorithms.
We show the results for the average accuracy of three tasks here since they are observed to be very similar to each other.
Moreover, over 99\% of the total cost is due to ingesting and fetching group statistics for all three tasks.

\subsubsection{Evaluation metrics:}
We measure the fidelity of each of the baselines by measuring both the accuracy of the aggregate metric (metric accuracy) as well as the accuracy of the anomaly detection algorithm (task accuracy).
We measure metric accuracy via the RMSE score and task accuracy via the classification accuracy of anomalies.
Since \model has perfect recall of the required statistics for decomposable statistics,
we expect the agreement accuracy to be 100\% for both the metric and the algorithm predictions, as are other perfect recall solutions: \rawingest, \cubeingest, \kvstore.
Due to approximations of \unifsample and \sketchsum, there may be
reduction in metric and task accuracy of these solutions.
We measure the average accuracy across time and across all user subgroups.
We also measure the variance in accuracy across subgroups since subgroups with small samples are particularly vulnerable to mispredictions by these methods.

\subsubsection{Cost Model:} To determine the total cost of the system (compute and storage), we measure the cost of performing this operation over a period of a month
on a EC2 system of similar configuration via AWS using S3 to store the
summarized data for each solution.
We measured the total compute time and multiplied it with cost of compute  (\$0.96 per hour) and similarly calculated the total storage cost by calculating the cost per GB (\$0.15 per month).
The combined cost is measured. We report the normalized cost which is the total cost divided by the cost of the default typical big data architecture solution \rawingest e.g. Spark.

\subsubsection{Calibrating hyperparameters for weak equivalence models:}
Weak equivalence methods have hyperparameters that determine the tradeoff between accuracy and cost
of the solution.
\unifsample has a hyperparameter $k$ which is the percentage of samples to be stored.
Similarly, \sketchsum has a hyperparameter $s$ which is the size of the sketch.
The size of the sketch is the number of hash functions used in the sketch times the number of buckets in the sketch.
We tune the number of hash functions of \sketchsum.
We tune the hyperparameters based on user requirements defined by parameters ($\delta, \epsilon$)
where $\delta$ is the percentage of groups that can have an error of at most $\epsilon$.
By default we set $\delta=0.95$ and tune for different $\epsilon$ to determine the cost trade-off.
\begin{table}[h]
  \scalebox{0.65}{
    \begin{tabular}{c|c|c|c|c | c |c}
      Dataset         & \conviva  & \trace & \taxi & \chbench & \mgbench & \bigdata \\\hline
      Attributes      & 12       & 3      & 2     & 32       & 4        & 31           \\
      Metrics         & 7         & 1      & 8     & 11       & 15       & 7             \\
      Avg. \# groups  & 1303562   & 344    & 513   & 28711    & 722      & 119552      \\
      Timesteps       & 1081        & 2351   & 730   & 7116     & 271      & 1998        \\
      Total size (GB) & 1667        & 48     & 2463  & 17.2     & 31.5     & 117.8       \\
    \end{tabular}
  }
  \caption{Datasets statistics}
  \label{tab: datasets}
\end{table}

\begin{figure*}[h]
  \centering
  \begin{subfigure}[b]{0.325\textwidth}
    \centering
    \includegraphics[width=\textwidth]{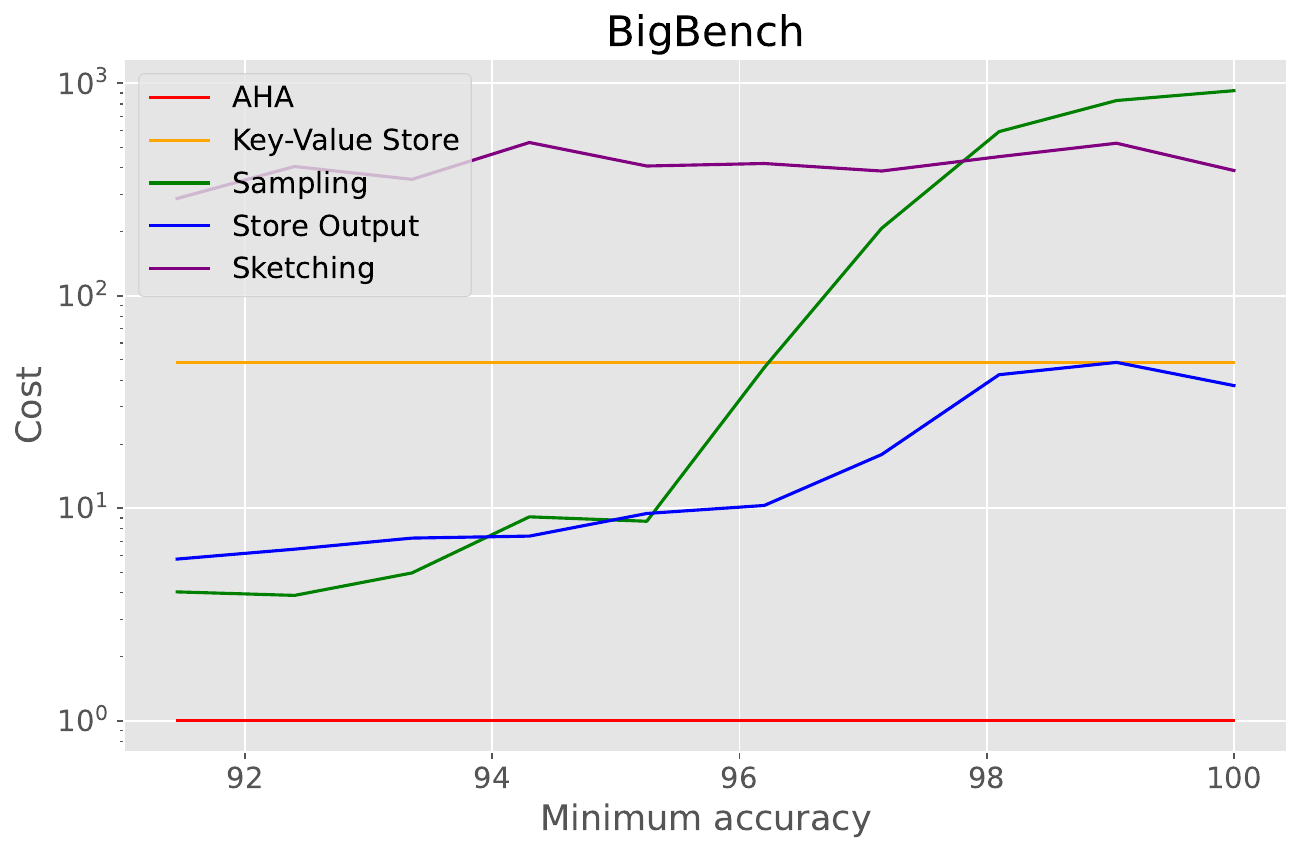}
    \caption{\bigdata}
    \label{fig:first}
  \end{subfigure}
  \hfill 
  \begin{subfigure}[b]{0.325\textwidth}
    \centering
    \includegraphics[width=\textwidth]{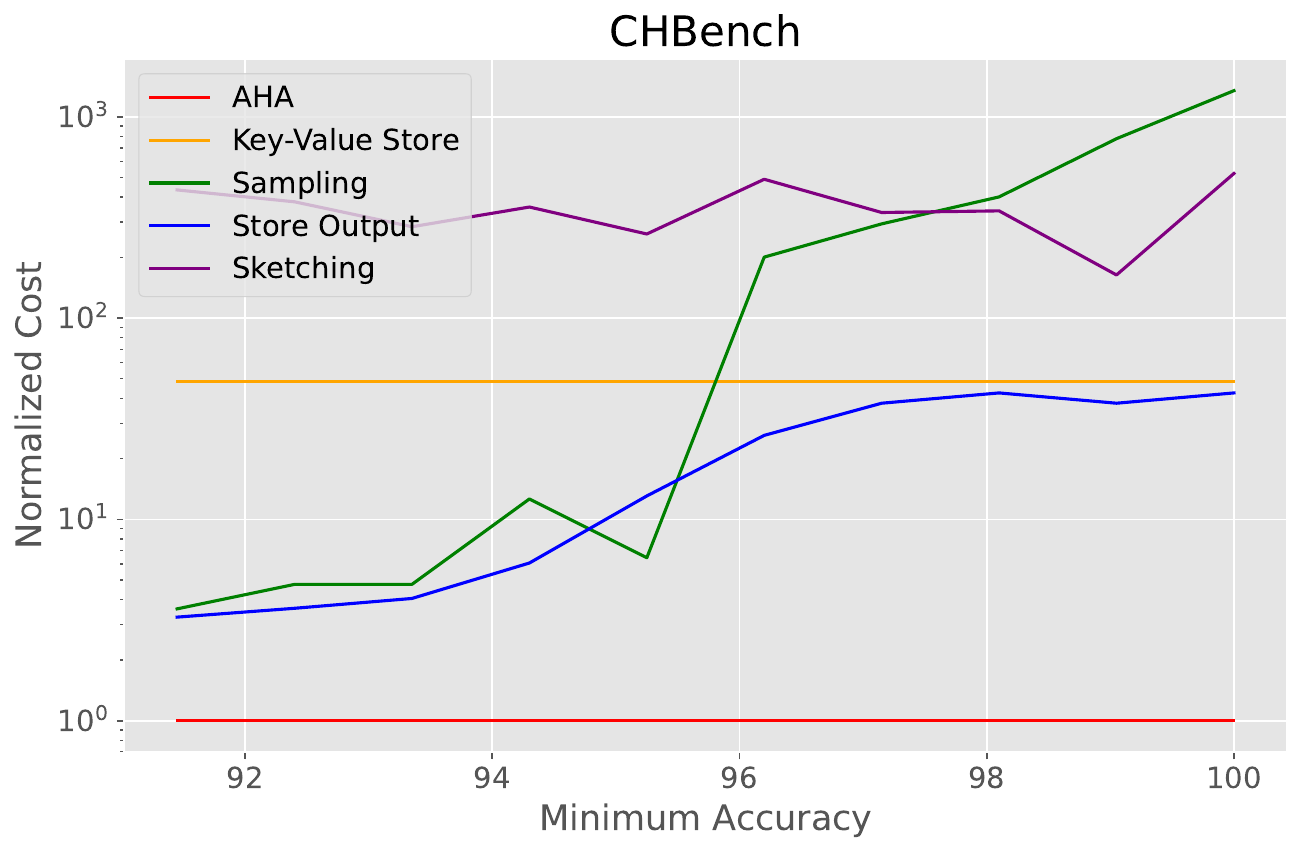}
    \caption{\chbench}
    \label{fig:second}
  \end{subfigure}
  \begin{subfigure}[b]{0.325\textwidth}
    \centering
    \includegraphics[width=\textwidth]{RoughFigs/q3/videoanalytics.pdf}
    \caption{\conviva}
    \label{fig:third}
  \end{subfigure}
  \hfill
  \begin{subfigure}[b]{0.325\textwidth}
    \centering
    \includegraphics[width=\textwidth]{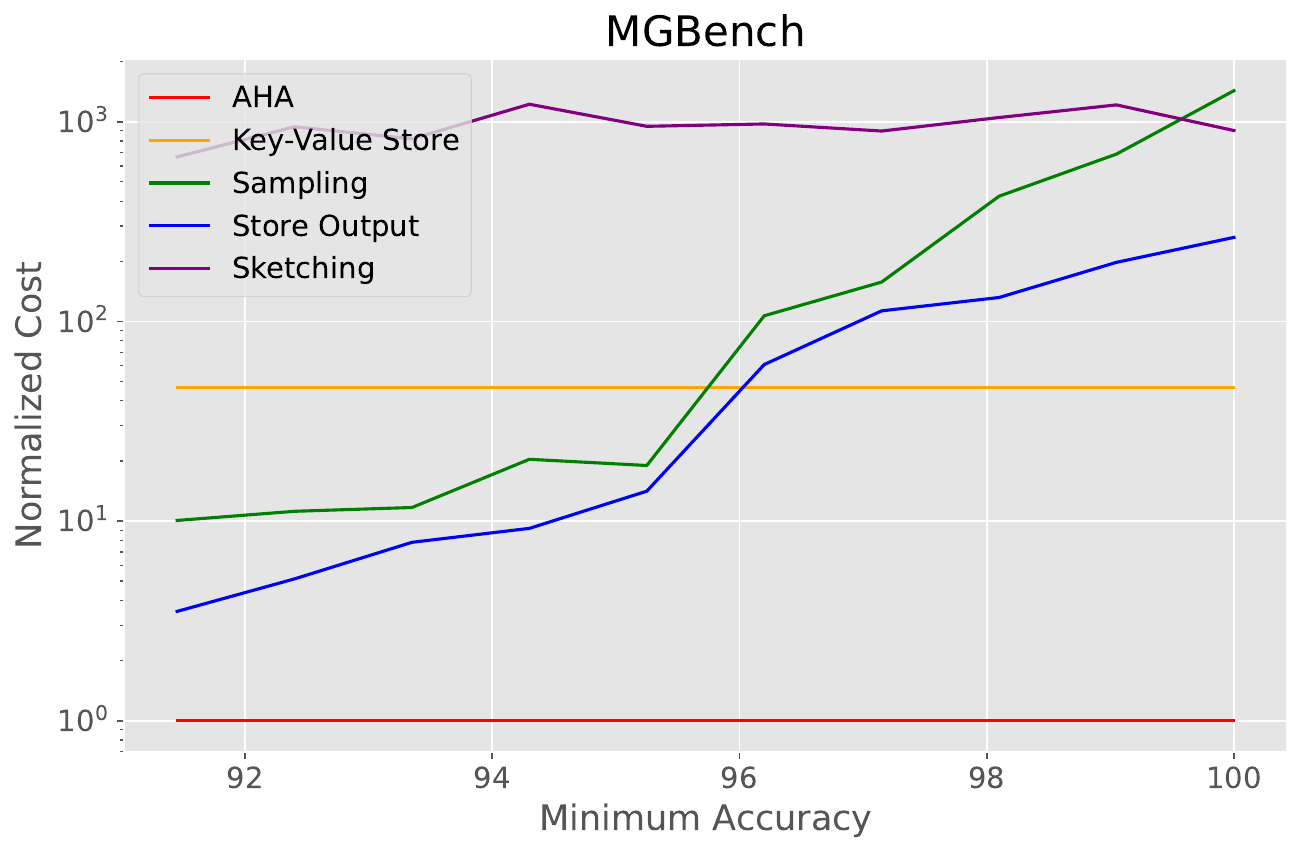}
    \caption{\mgbench}
    \label{fig:fourth}
  \end{subfigure}
  \begin{subfigure}[b]{0.325\textwidth}
    \centering
    \includegraphics[width=\textwidth]{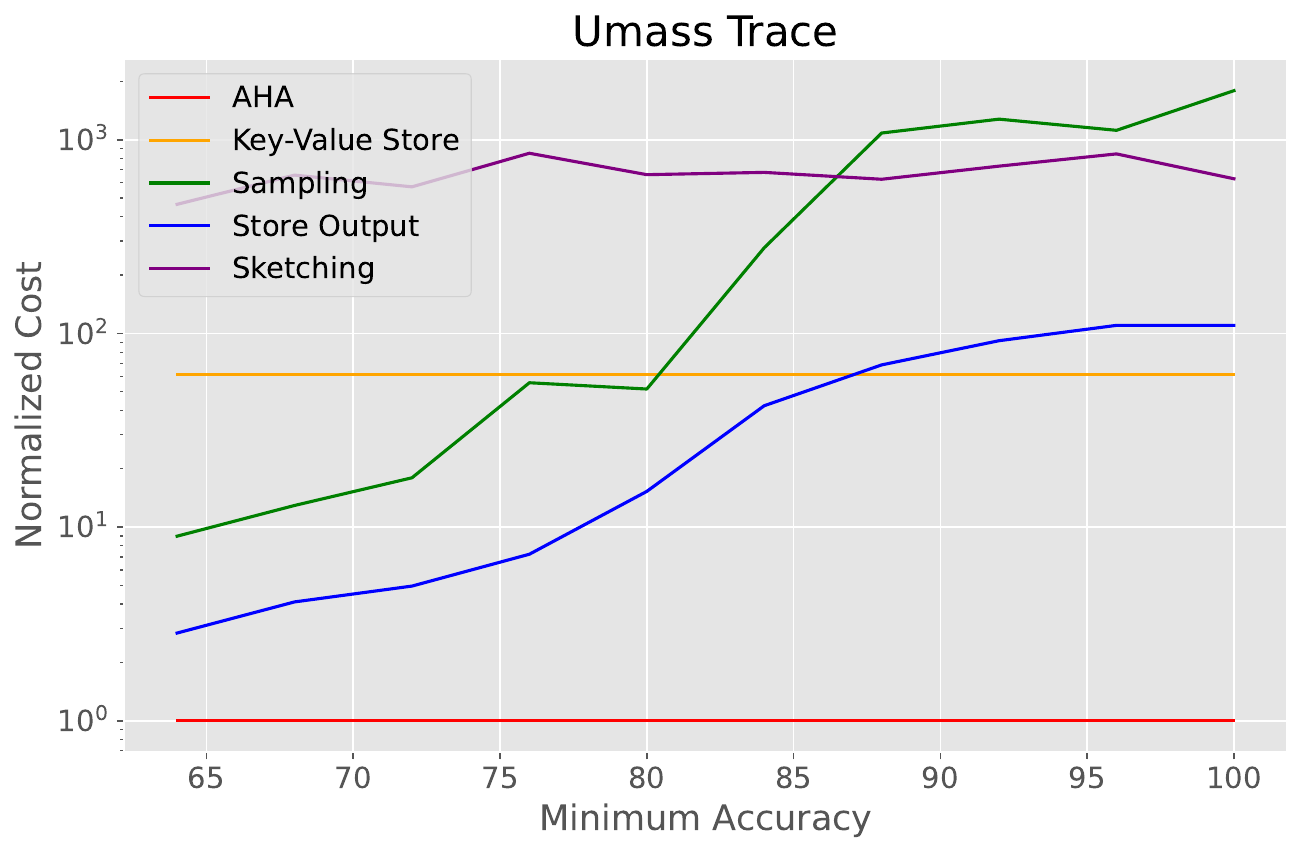}
    \caption{\trace}
    \label{fig:fifth}
  \end{subfigure}
  \hfill 
  \begin{subfigure}[b]{0.325\textwidth}
    \centering
    \includegraphics[width=\textwidth]{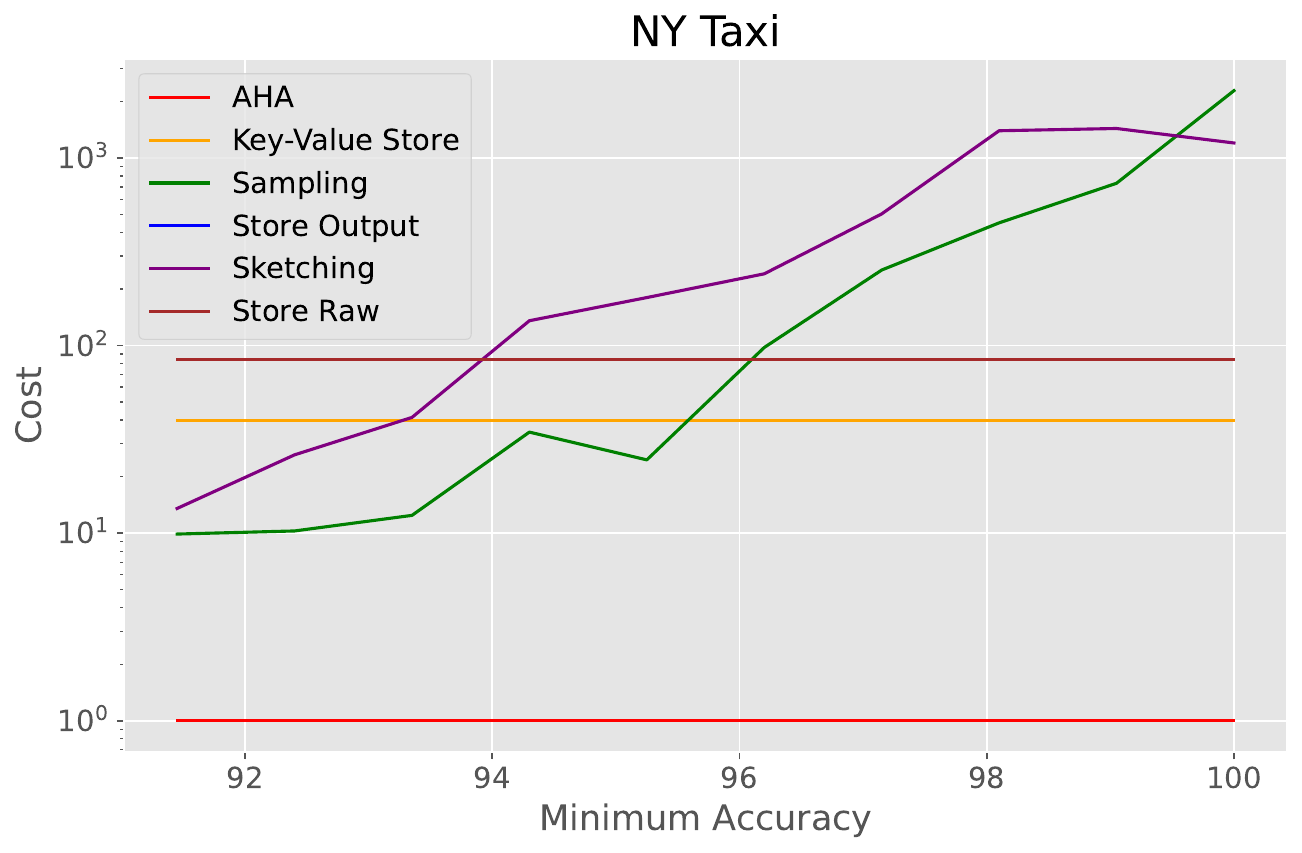}
    \caption{\taxi}
    \label{fig:sixth}
  \end{subfigure}
  \caption{Normalized Cost for different datasets for different minimum task accuracy requirements. 
  \model is the only solution with the lowest cost across all accuracy requirements (> 90\%) with 8.6-87 times lower cost. Also note that for accuracy > 97\% weak-equivalence methods cost more than even storing full session data showcasing compute overhead of sampling and sketching methods for high accuracy prediction.}
  \label{fig:three_subfigures_task}
\end{figure*}

\begin{figure*}[hbp]
  \centering
  \begin{subfigure}[b]{0.325\textwidth}
    \centering
    \includegraphics[width=\textwidth]{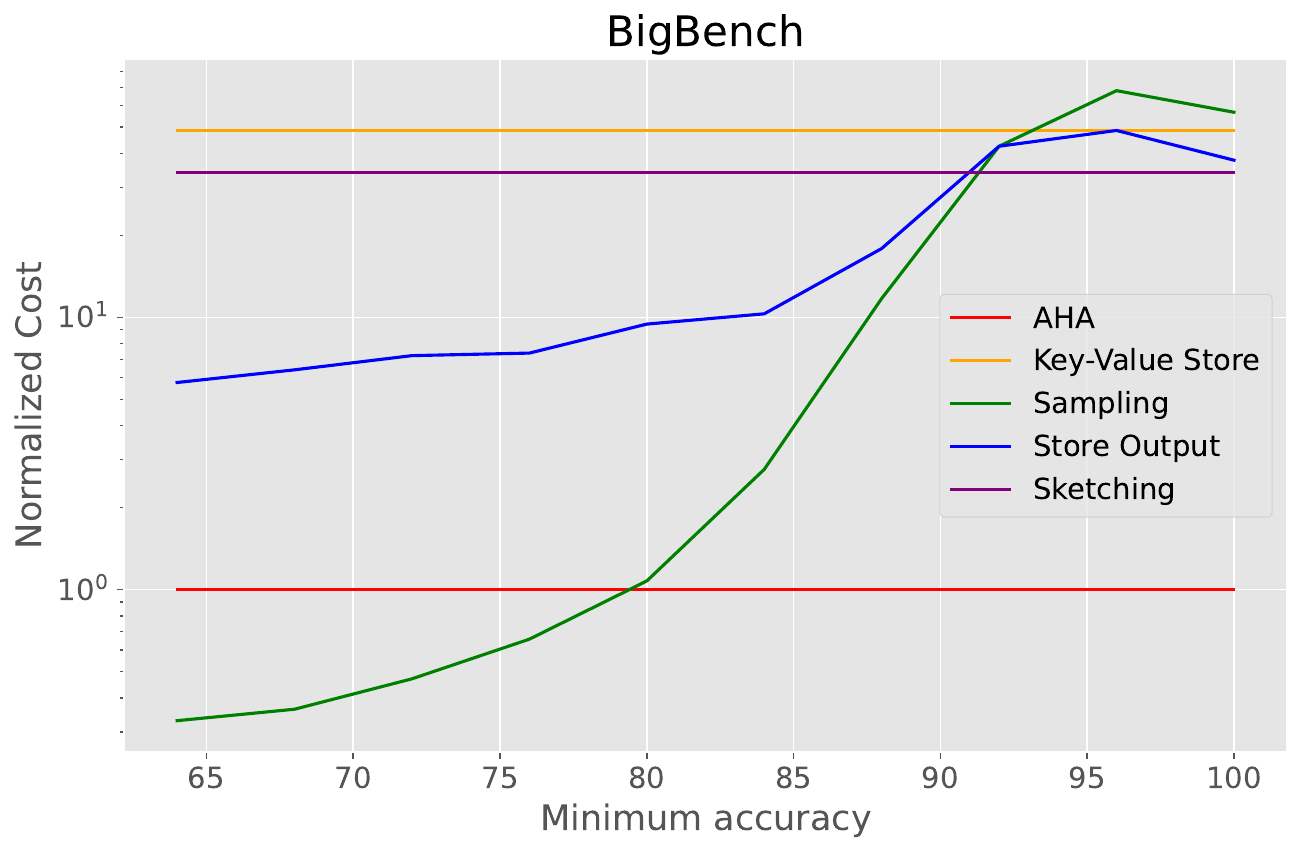}
    \caption{\bigdata}
    \label{fig:first}
  \end{subfigure}
  \hfill 
  \begin{subfigure}[b]{0.325\textwidth}
    \centering
    \includegraphics[width=\textwidth]{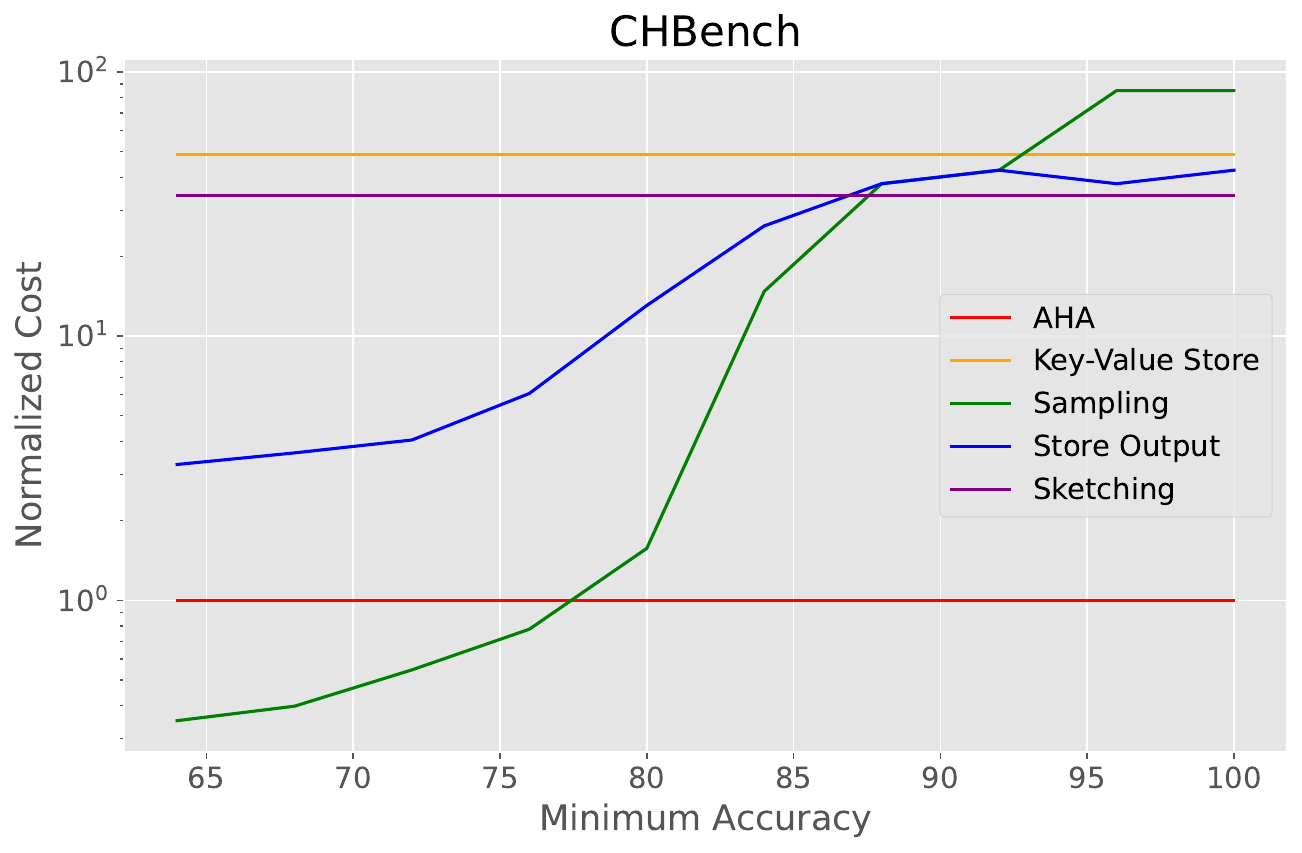}
    \caption{\chbench}
    \label{fig:second}
  \end{subfigure}
  \begin{subfigure}[b]{0.325\textwidth}
    \centering
    \includegraphics[width=\textwidth]{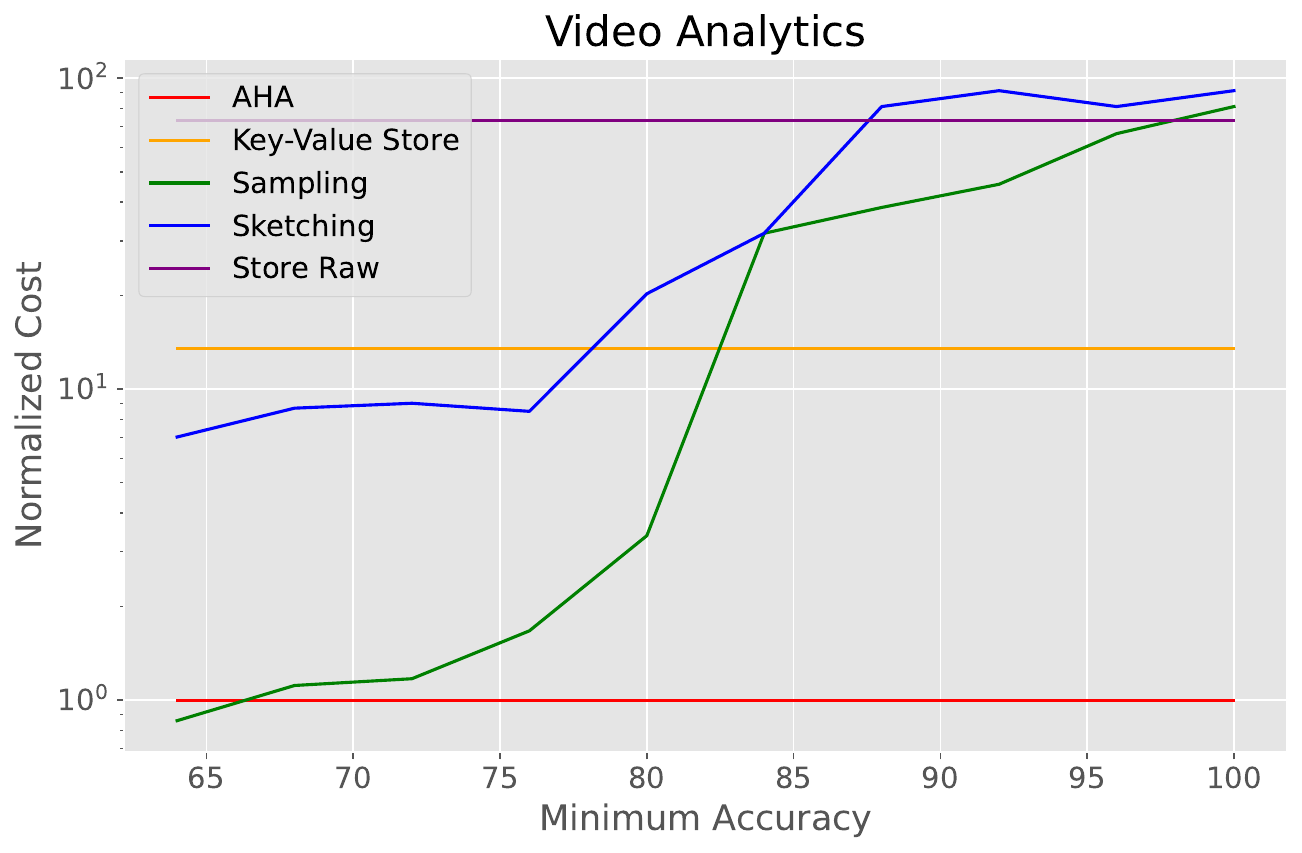}
    \caption{\conviva}
    \label{fig:third}
  \end{subfigure}
  \hfill
  \begin{subfigure}[b]{0.325\textwidth}
    \centering
    \includegraphics[width=\textwidth]{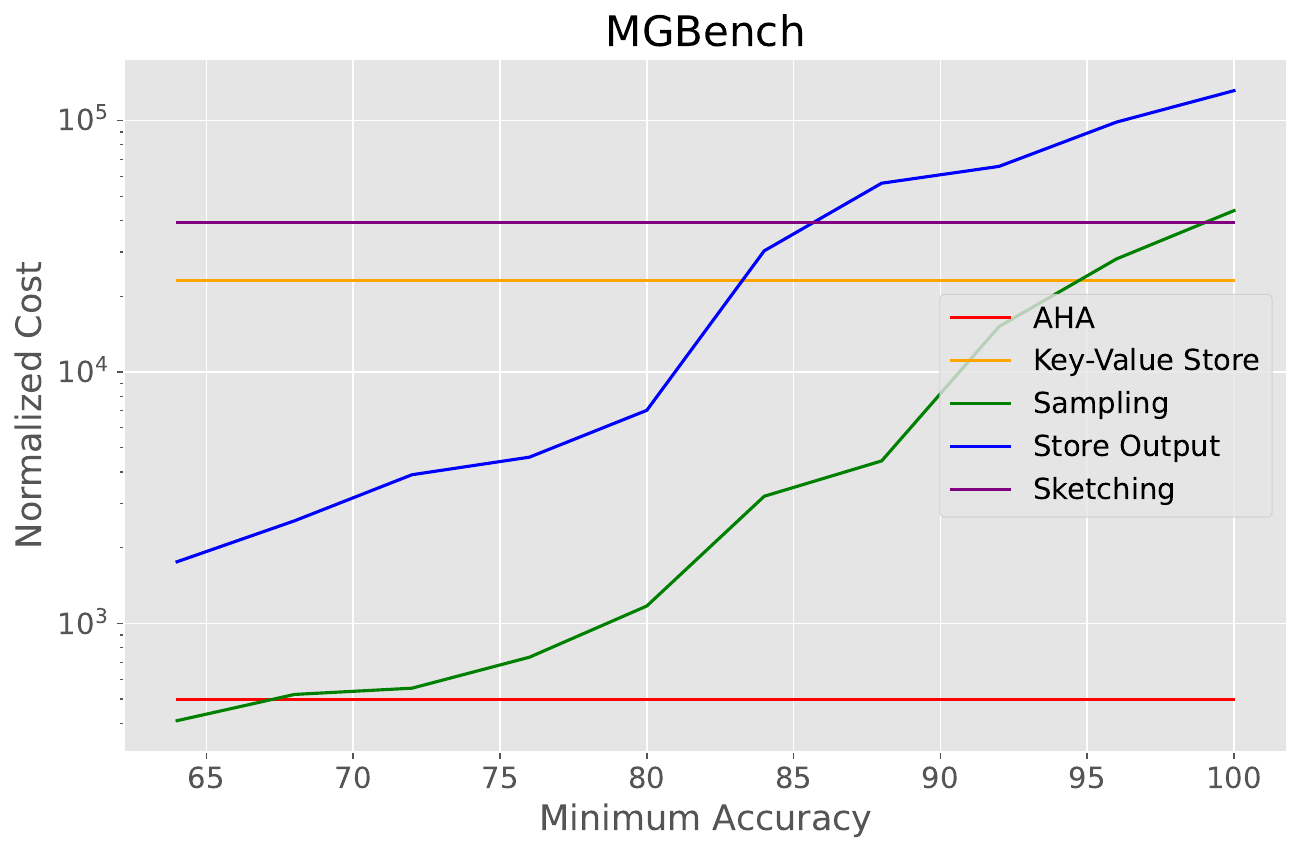}
    \caption{\mgbench}
    \label{fig:fourth}
  \end{subfigure}
  \begin{subfigure}[b]{0.325\textwidth}
    \centering
    \includegraphics[width=\textwidth]{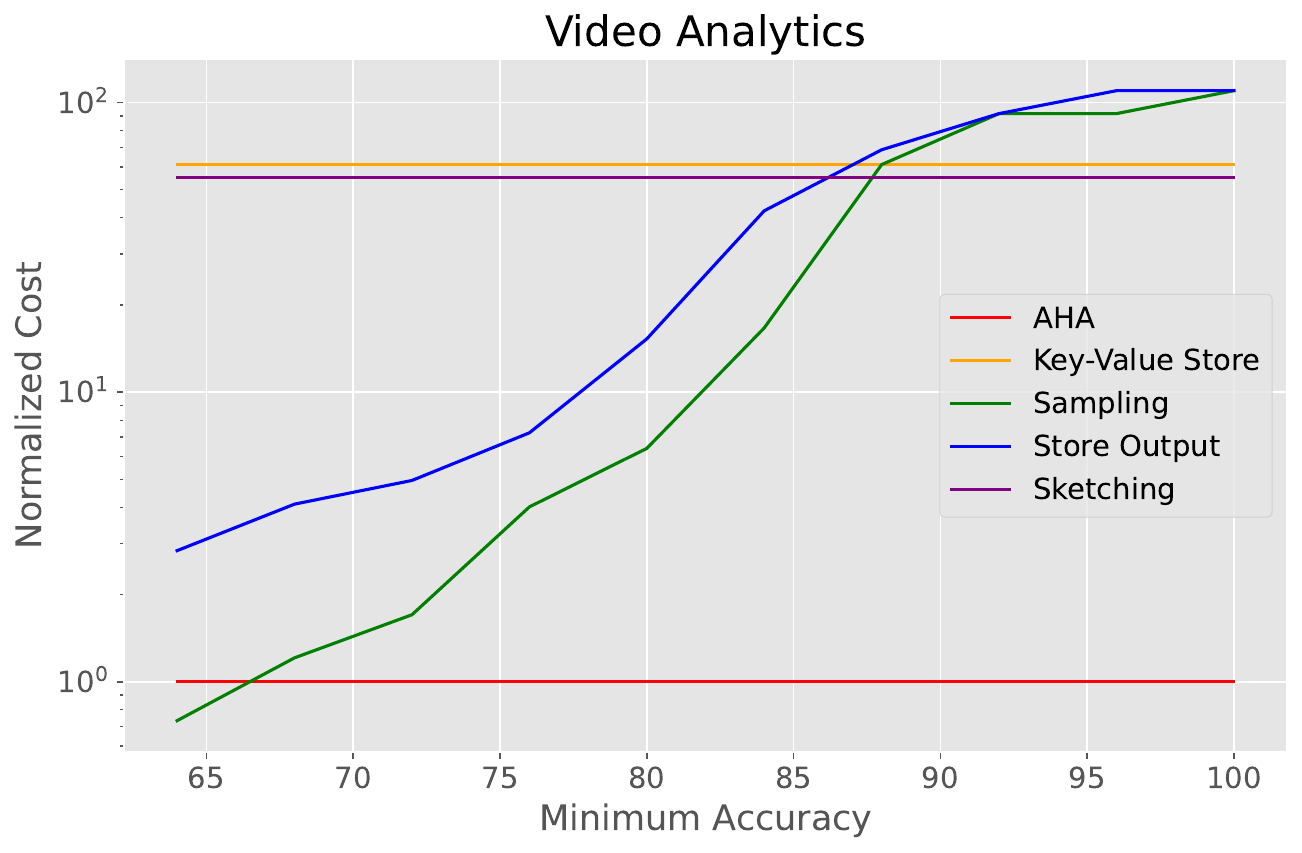}
    \caption{\trace}
    \label{fig:fifth}
  \end{subfigure}
  \hfill 
  \begin{subfigure}[b]{0.325\textwidth}
    \centering
    \includegraphics[width=\textwidth]{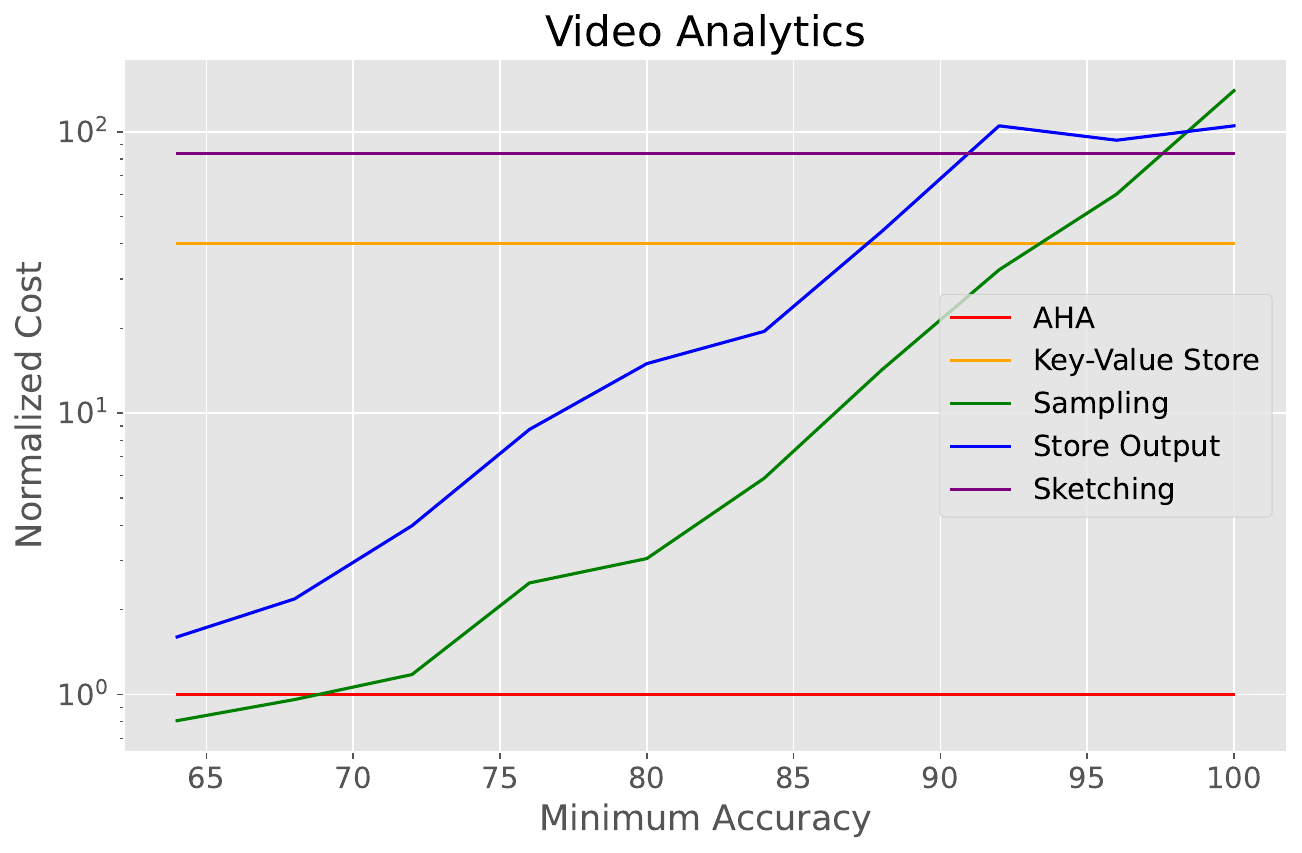}
    \caption{\taxi}
    \label{fig:sixth}
  \end{subfigure}
  \caption{Normalized Cost for different datasets for different minimum metric accuracy requirements. 
  \model is the only solution with the lowest cost. \model is 55-130 times more efficient among strong equivalence solutions and are the cheapest solutions for all datasets with over 80\% accuracy requirement.  }
  \label{fig:three_subfigures}
\end{figure*}

\end{document}